\tikzset{mylabel/.style  args={at #1 #2  with #3}{
    postaction={decorate,
    decoration={
      markings,
      mark= at position #1
      with  \node [#2] {#3};
 } } } }
\newenvironment{customlegend}[1][]{%
    \begingroup
    \csname pgfplots@init@cleared@structures\endcsname
    \pgfplotsset{#1}%
}{%
    \csname pgfplots@createlegend\endcsname
    \endgroup
}%
\def\addlegendimage{\csname pgfplots@addlegendimage\endcsname}
\newcommand{\eqflow}[1]{#1^{*}}
\newcommand{\optflow}[1]{\widetilde{#1}}
\newcommand{\expon}[1]{\mathrm{e}^{#1}}
\newcommand{\Opt}{\operatorname{\mathsf{Opt}}}
\newcommand{\PoA}{\operatorname{\mathsf{PoA}}}
\newcommand{\WEq}{\operatorname{\mathsf{WEq}}}
\newcommand{\diff}{\ \mathrm{d}}
\spnewtheorem{newclaim}[theorem]{Claim}{\bfseries}{\itshape}
\begin{document}

\title{On the Price of Anarchy of Highly Congested Nonatomic Network Games}
\titlerunning{Limit of the Price of Anarchy}

\author{Riccardo Colini-Baldeschi\inst{1}
\and Roberto Cominetti\inst{2}
\and Marco Scarsini\inst{1}}

\institute{Dipartimento di Economia e Finanza, LUISS, Viale Romania 32, 00197 Roma, Italy, \email{rcolini@luiss.it}, \email{marco.scarsini@luiss.it}
\and 
Facultad de Ingenier\'ia y Ciencias, Universidad Adolfo Ib\'a\~nez, Santiago, Chile, \email{roberto.cominetti@uai.cl}
}

\maketitle

\begin{abstract}
We consider nonatomic network games with one source and one destination. We examine the asymptotic behavior of the price of anarchy as the inflow increases. In accordance with some empirical observations, we show that, under suitable conditions, the price of anarchy is asymptotic to one. We show with some counterexamples that this is not always the case. The counterexamples occur in very simple parallel graphs.
\end{abstract}

\section{Introduction}\label{se:intro}

The analysis of network routing costs and their efficiency goes back at least to Pigou \cite{Pig:Macmillan1920}, who, in the first edition of his book introduces his famous two-road model. 
Wardrop \cite{War:PICE1952} develops a model where many players (vehicles on the road) choose a road in order to minimize their cost (traveling time) and the influence of each one of them, singularly taken, is negligible. He introduces a concept of equilibrium that has become the standard in the literature on nonatomic network games. 

When travelers minimize their traveling time without considering the negative externalities that their behavior has on other travelers, the collective outcome of the choices of all travelers is typically inefficient, i.e., it is worse than the outcome that a benevolent planner would have achieved. Various measures have been proposed to quantify this inefficiency. Among them the price of anarchy has been the most successful. Introduced by Koutsoupias and Papadimitriou
\cite{KouPap:STACS1999} and given this name by Papadimitriou
\cite{Pap:PACM2001}, it is the ratio of the worst social equilibrium cost and the optimum cost.
The price of anarchy has been studied by several authors and interesting bounds for it have been found under some conditions on the cost functions.

Most of the existing results about the price of anarchy consider worst-case  scenarios. These results are not necessarily helpful in specific situations. 
In a nice recent paper O'Hare et al. \cite{OHaConWat:TRB2016} show, both theoretically and with the aid of simulations, how the price of anarchy is affected by changes in the total inflow of players. They consider data for three cities and they  write 

\emph{In each city, it can be seen that there are broadly three identifiably distinct regions of behaviour: an initial region in which the Price of Anarchy is one; an intermediate region of fluctuations; and a final region of decay, which has a similar characteristic shape across all three networks. The similarities in this general behaviour across the three cities suggest that there may be common mechanisms that drive this variation.
}

The core of the paper  by \cite{OHaConWat:TRB2016} is an analysis of the intermediate fluctuations. In our paper we will mainly look at  the asymptotic behavior of the price of anarchy. 
We consider nonatomic congestion games with single source and single destination. We show that for a large class of cost functions the price of anarchy is, indeed, asymptotic to one, as the mass of players grows. Nevertheless, we can find counterexamples where its $\limsup$ is not $1$ and it can even be infinite.

\subsubsection*{Contribution}

The goal of this paper is twofold. On one hand we provide some positive results that show that under some conditions the price of anarchy of nonatomic network games is indeed asymptotic to one. On the other hand, we present some counterexamples where the $\limsup$ of the price of anarchy is not one.

In particular, first we show that, for any single-source, single-destination graph, the price of anarchy is asymptotic to one whenever the cost of at least one path is bounded. Then we move the analysis to parallel graphs and   we show that in this class the price of anarchy is asymptotic to one for a large class of cost functions that we characterize in terms of regularly varying functions (see \cite{BinGolTeu:CUP1989} for properties of these functions). This class of cost functions includes affine functions and cost functions that can be bounded by a pair of affine functions with the same slope.
 
Counterexamples can be found where the behavior of the price of anarchy is periodic on a logarithmic scale, therefore its $\limsup$ is larger than one both as the mass of players grows unbounded and as it goes to zero. In another counterexample the $\limsup$ of the price of anarchy is infinite. 
A further counterexample shows that the price of anarchy may not converge to one even for convex cost functions. What is interesting is that all the counterexamples concern a very simple parallel graph with just two edges. Therefore the bad behavior of the price of anarchy depends solely on the costs and not on the topology of the graph. This is in stark contrast with the results in \cite{OHaConWat:TRB2016}, where the irregular behavior of the price of anarchy in the intermediate region of inflow heavily depends on the structure of the graph.

\subsubsection*{Related literature}

Wardrop's nonatomic model has been studied by Beckmann et al. \cite{BecMcGWin:Yale1956} and many others. 
The formal foundation of games with a continuum of players came with Schmeidler \cite{Sch:JSP1973} and then with Mas Colell \cite{Mas:JME1984}.
Nonatomic congestion  games have been studied, among others, by Milchtaich 
\cite{Mil:MOR2000,Mil:JET2004}.

Various bounds for the price of anarchy in nonatomic games have been proved, under different conditions.
In particular Roughgarden and Tardos \cite{RouTar:JACM202} prove that, when the cost functions are affine, the price of anarchy in nonatomic games is at most $4/3$, irrespective of the topology of the network. The bound is sharp and is attained even in very simple networks.  Several authors have extended this bound to larger classes of functions. 
Roughgarden \cite{Rou:JCSS2003} shows that if the class of cost functions includes the constants, then the worst price of anarchy is achieved on parallel networks with just two edges. In his paper he considers bounds for the price of anarchy when the cost functions are polynomials whose degree is at most $d$.
Dumrauf and Gairing \cite{DumGai:INE2006} do the same when the degrees of the polynomials are all between $s$ and $d$.
 Roughgarden and Tardos \cite{RouTar:GEB2004} provide a unifying result for the class of standard costs, i.e., costs $c$ that are differentiable and such that $x c(x)$ is convex. 
Correa et al. \cite{CorSchSti:MOR2004} consider the price of anarchy for networks where edges have a capacity and costs are not necessarily convex, differentiable, or even continuous. In \cite{CorSchSti:GEB2008} they reinterpret and extend these results using a geometric approach.
In \cite{CorSchSti:OR2007} they consider the problem of minimizing the maximum latency rather than the average latency and provide results about the price of anarchy in this framework. The reader is referred to \cite{RouTar:AGT2007,Rou:AGT2007} for a survey of the literature.

Some papers show how in real life the price of anarchy may substantially differ from the worst-case scenario, \cite{YouGasJeo:PRL2008,
LawHuaLiu:IEEETWC2012}. The papers
Gonz\'alez Vay\'a et al.
\cite{GonGraAndLyg:IEEECDC2015} deal with a problem of optimal schedule
for the electricity demand of a fleet of plug-in electric vehicles. Without using the term, they show that the price of anarchy goes to one as the number of vehicles grows.
Cole and Tao 
\cite{ColTao:ArXiv2015} study large Walrasian auctions and large Fisher markets and show that in both cases  the price of anarchy goes to one as the market size increases.
Feldman et al. \cite{FelImmLucRouSyr:ArXiv2015} define a concept of $(\lambda,\mu)$-smoothness for sequences of games, and show that the price of anarchy in atomic congestion games converges to the price of anarchy of the corresponding nonatomic game, when the number of players grows. 
\cite{Pat:TS2004} and \cite{JosPat:TRB2007} perform sensitivity analysis of Wardrop equilibrium to some parameters of the model.
Closer to the scope of our paper, Englert et al. \cite{EngFraOlb:TCS2010} examine how the equilibrium of a congestion game changes when either the total mass of players is increased by $\varepsilon$ or an edge that carries an $\varepsilon$ fraction of the mass is removed. For polynomial cost functions they bound the increase of the equilibrium cost when a mass $\varepsilon$ of players is added to the system.

\section{The model}\label{se:model}

Consider a finite directed multigraph $\mathcal{G}=(V,E)$, where $V$ is a set of vertices and $E$ is a set of edges. The graph $G$ together with a source $s$ and a destination $t$ is called a network.  A path $P$ is a set of consecutive edges that go from source to destination. Call $\mathcal{P}$ the set of all paths. Each path $P$ has a flow $x_{P}\ge 0$ and call  $\boldsymbol{x}=(x_{P})_{P\in\mathcal{P}}$. The total flow from source to destination is denoted by $M\in\mathbb{R}_{+}$. A flow $\boldsymbol{x}$ is \emph{feasible} if
$\sum_{P\in\mathcal{P}}x_{P}=M$.
Call $\mathcal{F}_{M}$ the set of feasible flows.
For each edge $e\in E$ there exists a cost function $c_{e}(\cdot):\mathbb{R}_{+}\to\mathbb{R}_{+}$, that is assumed (weakly) increasing and continuous. Call $\boldsymbol{c}=(c_{e})_{e\in E}$.
This defines a \emph{nonatomic congestion game} $\Gamma_{M}=(\mathcal{G},M,\boldsymbol{c})$. The number $M$ can be seen as the mass of players who play the game. 

The cost of a path $P$ with respect to a flow $\boldsymbol{x}$ is the sum of the cost of its edges: 
$c_{P}(\boldsymbol{x})=\sum_{e\in P}c_{e}(x_{e})$, where 
\[
x_{e}=\sum_{\substack{P\in\mathcal{P}:\\e\in P}}x_{P}.
\]

For each flow $\boldsymbol{x}$ define the \emph{social cost} associated to it as
\[
C(\boldsymbol{x}) := \sum_{P\in\mathcal{P}}x_{P}c_{P}(\boldsymbol{x})=\sum_{e\in E} x_e c_e(x_e).
\]

A flow $\eqflow{\boldsymbol{x}}$ is an \emph{equilibrium flow} if for every $P,Q\in\mathcal{P}$ such that $\eqflow{x}_{P}>0$ we have
$c_{P}(\eqflow{\boldsymbol{x}}) \le c_{Q}(\eqflow{\boldsymbol{x}}).$
Call $\mathcal{E}(\Gamma_{M})$ the set of equilibrium flows in $\Gamma_{M}$ and define
$\WEq(\Gamma_{M}) = \max_{\boldsymbol{x}\in\mathcal{E}(\Gamma_{M})}C(\boldsymbol{x})$
the \emph{worst equilibrium cost} of $\Gamma_{M}$.  Actually, in the present setting the cost $C(\eqflow{\boldsymbol{x}})$ is the same for every equilibrium $\eqflow{\boldsymbol{x}}$ 
(see \cite{FloHea:HTS2003}).

A flow $\optflow{\boldsymbol{x}}$ is an \emph{optimum flow} if 
$C(\optflow{\boldsymbol{x}}) = \min_{\boldsymbol{x}\in\mathcal{F}_{M}}C(\boldsymbol{x}).$
Call $\mathcal{O}(\Gamma_{M})$ the set of optimum flows in $\Gamma_{M}$ and define
$\Opt(\Gamma_{M})=C(\optflow{\boldsymbol{x}}), \quad\text{for }\optflow{\boldsymbol{x}} \in \mathcal{O}(\Gamma_{M})$
the \emph{optimum cost} of $\Gamma_{M}$.

The \emph{price of anarchy} of the game $\Gamma_{M}$ is defined as
\begin{equation*}
\PoA(\Gamma_{M}):=\frac{\WEq(\Gamma_{M})}{\Opt(\Gamma_{M})}.
\end{equation*}

We will be interested in the price of anarchy of this game, as $M\to\infty$. We will show that, under some conditions, it is asymptotic to one. We call \emph{asymptotically well behaved} the congestion games for which this happens.

\section{Well behaved congestion games}
\label{se:well_behaved}

\subsection{General result}\label{suse:general}

The following general result shows that for any network the price of anarchy is asymptotic to one when at least one path has a bounded cost.

\begin{theorem}\label{th:ConstantCost}
For each path $P\in\mathcal{P}$  denote 
\begin{equation*}
c_P^\infty=\sum_{e\in P}c_e^{\infty}\quad\text{with}\quad c_e^\infty=\lim_{z\to\infty}c_e(z)
\end{equation*} 
and suppose that $B:=\min_{P\in\mathcal{P}}c_P^\infty$
is finite. Then, $\lim_{M\to\infty} \PoA(\Gamma_M) = 1$.
\end{theorem}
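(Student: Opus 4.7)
The plan is to squeeze both $\WEq(\Gamma_M)$ and $\Opt(\Gamma_M)$ between two quantities that share the same linear rate in $M$, so that the ratio is forced to tend to $1$. The obvious upper bound $\WEq(\Gamma_M)\le MB$ comes from the bounded path $P^*$; the matching lower bound on $\Opt(\Gamma_M)$ will come from Beckmann's potential
\[
\Phi(\boldsymbol{x})\;=\;\sum_{e\in E}\int_0^{x_e} c_e(z)\diff z.
\]

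First I would pick a path $P^*$ attaining $c_{P^*}^\infty=B$ and observe that monotonicity of each $c_e$ gives $c_{P^*}(\boldsymbol{x})\le\sum_{e\in P^*}c_e^\infty=B$ for every flow, so that the common equilibrium cost $L(M)$ on used paths satisfies $L(M)\le c_{P^*}(\eqflow{\boldsymbol{x}})\le B$ and hence $\WEq(\Gamma_M)=M\,L(M)\le MB$. Next I would invoke three standard facts about $\Phi$: it is minimized on $\mathcal{F}_M$ at the equilibrium $\eqflow{\boldsymbol{x}}$; the pointwise bound $\int_0^{x_e} c_e(z)\diff z\le x_ec_e(x_e)$ yields $\Phi\le C$ and hence
\[
\Phi^*(M)\,:=\,\min_{\boldsymbol{x}\in\mathcal{F}_M}\Phi(\boldsymbol{x})\;\le\;\Phi(\optflow{\boldsymbol{x}})\;\le\;C(\optflow{\boldsymbol{x}})\;=\;\Opt(\Gamma_M);
\]
and sensitivity analysis of this convex program identifies $L(M)$ as the Lagrange multiplier of the mass constraint $\sum_P x_P=M$, so $\frac{d}{dM}\Phi^*(M)=L(M)$ and $\Phi^*(M)=\int_0^M L(m)\diff m$. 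A by-product is convexity of $\Phi^*$, which yields monotonicity of $L(M)$ in $M$.

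Being non-decreasing and bounded above by $B$, the map $M\mapsto L(M)$ has a limit $L_\infty\in[0,B]$. In the nondegenerate case $L_\infty>0$ (the alternative $L_\infty=0$ forces $\WEq\equiv\Opt\equiv 0$), a Ces\`aro argument gives $\Phi^*(M)/M=\frac{1}{M}\int_0^M L(m)\diff m\to L_\infty$, so that the sandwich
\[
\frac{\Phi^*(M)}{M}\;\le\;\frac{\Opt(\Gamma_M)}{M}\;\le\;\frac{\WEq(\Gamma_M)}{M}\;=\;L(M)
\]
forces $\Opt(\Gamma_M)/M\to L_\infty$ as well, and therefore $\PoA(\Gamma_M)=(\WEq(\Gamma_M)/M)/(\Opt(\Gamma_M)/M)\to L_\infty/L_\infty=1$.

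The main obstacle I expect is the envelope identity $\frac{d}{dM}\Phi^*(M)=L(M)$, which requires a careful setup of Beckmann's convex program and an extraction of the Lagrange multiplier associated with the mass-conservation constraint. Everything else reduces to an elementary Ces\`aro mean computation. A pleasant feature of this route is that the argument never needs to determine whether the limit $L_\infty$ actually equals $B$; it is enough that $\WEq/M$ and $\Opt/M$ share a common positive limit, whatever that limit may be.
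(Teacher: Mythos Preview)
Your argument is correct and takes a genuinely different route from the paper. The paper proves directly that $\Opt(\Gamma_M)/M\to B$: after rescaling to the simplex, the objective $\Phi_M(\boldsymbol{y})=\sum_P y_P c_P(M\boldsymbol{y})$ increases monotonically in $M$ to $\Phi_\infty(\boldsymbol{y})=\sum_{P:y_P>0}y_Pc_P^\infty$, and monotone epi-convergence over the compact simplex forces the minima to converge to $\min\Phi_\infty=B$. Your approach sidesteps epi-convergence entirely by sandwiching $\Opt(\Gamma_M)/M$ between the Ces\`aro mean $\Phi^*(M)/M=\tfrac1M\int_0^M L(m)\diff m$ and $L(M)$ itself, both of which tend to the same limit $L_\infty$. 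The trade-off is clear: the paper's route identifies the limit as $B$ but invokes an abstract variational-convergence result, whereas your route is more elementary (convexity of $\Phi^*$, monotonicity of $L$, a Ces\`aro average) at the price of the envelope identity $\tfrac{d}{dM}\Phi^*(M)=L(M)$, whose justification you rightly flag as the one nontrivial step. A nice structural feature of your argument, which you note, is that it never needs to know that $L_\infty=B$; the conclusion follows as soon as $\WEq/M$ and $\Opt/M$ share any common positive limit.
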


\begin{proof}
Let $\eqflow{\boldsymbol{x}}$ be an equilibrium for $\Gamma_M$. Then if $\eqflow{x}_{P}>0$ we have
\begin{equation*}
c_P(\eqflow{\boldsymbol{x}})=\min_{Q\in\mathcal{P}}c_Q(\eqflow{\boldsymbol{x}})\leq \min_{Q\in\mathcal{P}}c_Q^\infty = B
\end{equation*}
and therefore 
\begin{equation*}
\WEq(\Gamma_M)=\sum_{P\in\mathcal{P}}\eqflow{x}_P c_P(\eqflow{\boldsymbol{x}})\leq \sum_{P\in\mathcal{P}}\eqflow{x}_PB=MB.
\end{equation*}
It follows that 
\begin{equation*}
\PoA(\Gamma_M)\leq \frac{MB}{\Opt(\Gamma_M)},
\end{equation*}
so that it suffices to prove that $\Opt(\Gamma_M)/M\to B$.
To this end denote $\Delta(\mathcal{P})$ the simplex defined by
$\boldsymbol{y}=(y_P)_{P\in\mathcal{P}}\geq 0$ and $\sum_{P\in\mathcal{P}}y_P=1$, 
so that
\begin{align*}
\frac{1}{M}\Opt(\Gamma_M)&=\min_{\boldsymbol{x}\in\mathcal{F}_M}\sum_{P\in\mathcal{P}}\frac{x_P}{M}c_P(\boldsymbol{x})\\
&=\min_{\boldsymbol{y}\in\Delta(\mathcal{P})}\sum_{P\in\mathcal{P}}y_Pc_P(M\boldsymbol{y}).
\end{align*}
Denote $\Phi_M(\boldsymbol{y})=\sum_{P\in\mathcal{P}}y_Pc_P(M\boldsymbol{y})$. Since the cost functions
$c_e(\cdot)$ are non-decreasing, the family $\Phi_M(\cdot)$ monotonically increases with $M$ 
towards the limit function
\begin{equation*}
\Phi_\infty(y)=\sum_{P\in\mathcal{P}:y_P>0}y_Pc_P^\infty.
\end{equation*}
Now we use the fact that a monotonically increasing  family of functions epi-converges (see \cite{Att:Pitman1984}) and since $\Delta(\mathcal{P})$ is
compact it follows that the minimum $\min_{\boldsymbol{y}\in\Delta(\mathcal{P})}\Phi_M(\boldsymbol{y})$
converges as $M\to\infty$ towards
\begin{equation*}
\min_{\boldsymbol{y}\in\Delta(\mathcal{P})}\Phi_\infty(\boldsymbol{y}).
\end{equation*}
Clearly this latter optimal value is $B$ and is attained by setting $y_P>0$ only on those paths
$P$ that attain the smallest value $c_P^\infty=B$, and therefore we conclude 
\begin{equation*}
\frac{1}{M}\Opt(\Gamma_M)=\min_{y\in\Delta(\mathcal{P})}\Phi_M(\boldsymbol{y})\to B,
\end{equation*}
as was to be proved. \qed
\end{proof}

\subsection{Parallel graphs}
\label{suse:parallel}

In this section we examine the asymptotic behavior of the price of anarchy when the game is played on a parallel graph. 

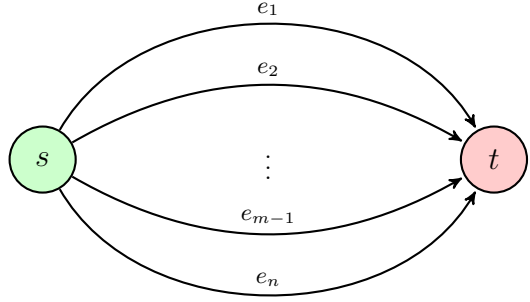
\begin{figure}[h]
\centering
\begin{tikzpicture}[->,>=stealth',shorten >=1pt,auto,node distance=3cm,
  thick,main node/.style={circle,fill=blue!20,draw,minimum size=25pt,font=\sffamily\large\bfseries},source node/.style={circle,fill=green!20,draw,minimum size=25pt,font=\sffamily\large\bfseries},dest node/.style={circle,fill=red!20,draw,minimum size=25pt,font=\sffamily\large\bfseries}]

  \node[source node] (1) {$s$};
  \node (m) [right of=1] {$\vdots$};
  \node[dest node] (2) [right of=m] {$t$};

  \path[every node/.style={font=\sffamily\small}]
    (1) edge [bend left = 60] node[above] {$e_{1}$} (2)
        edge [bend left = 30] node[above] {$e_{2}$} (2)
        edge [bend right = 30] node[above] {$e_{m-1}$} (2)
        edge [bend right = 60] node[above] {$e_{n}$} (2);

\end{tikzpicture}
~\vspace{0cm} \caption{\label{fi:parallelnetwork} General parallel network.}
\end{figure}

Let $\mathcal{G} = (V,E)$ be a parallel graph such that $V = \{s,t\}$ are the vertices and $E = \{e_1,e_2,\ldots,e_{n}\}$ are the edges, as in Figure~\ref{fi:parallelnetwork}.
For each edge $e_{i}\in E$ the function $c_i(\cdot)$ represents the cost function of the edge $e_{i}$. Call $\Gamma_{M}=(\mathcal{G},M,\boldsymbol{c})$ the corresponding game. In the whole section we will deal with this graph.
%

\subsubsection{Adding a constant to costs}

First we prove a preservation result. We show that if the price of anarchy of a game converges to $1$, then adding positive constants to each cost does not alter this asymptotic behavior.

\begin{theorem}\label{th:addconstant}
Given a game  $\Gamma_{M}=(\mathcal{G},M,\boldsymbol{c})$ and a  vector $\boldsymbol{a}\in [0,\infty)^{n}$, consider a new game $\Gamma_{M}^{\boldsymbol{a}}(\mathcal{G},M,\boldsymbol{c}^{\boldsymbol{a}})$, where
\begin{equation*}
c_{i}^{\boldsymbol{a}}(x) = a_{i}+c_{i}(x).
\end{equation*}
If $c_i(\cdot)$ is strictly increasing and continuous, $\lim_{x\to\infty}c_{i}(x)=\infty$ for all $e_{i}\in E$, and\, $\lim_{M \to \infty} \PoA(\Gamma_M) = 1$, then $\lim_{M \to \infty} \PoA(\Gamma_{M}^{\boldsymbol{a}}) = 1$.
\end{theorem}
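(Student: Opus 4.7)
The plan is to exploit the parallel structure together with the hypothesis $\PoA(\Gamma_M)\to 1$ to show that the additive constants $a_i$ become asymptotically negligible compared to the absolute equilibrium and optimum costs, both of which grow faster than linearly in $M$. I would proceed in three steps.

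\textbf{Step 1: the absolute costs diverge faster than $M$.} On a parallel graph each edge is a path, and at any equilibrium every used edge carries the same cost $\lambda(M)$, so $\WEq(\Gamma_M)=M\lambda(M)$. By pigeonhole some used edge $i$ carries flow at least $M/n$, whence $\lambda(M)\geq c_i(M/n)\to\infty$ by the assumption $c_i(x)\to\infty$. Combined with $\PoA(\Gamma_M)\to 1$, this gives $\Opt(\Gamma_M)/M\to\infty$ as well.

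\textbf{Step 2: comparison of common path costs.} Let $\lambda^{\boldsymbol{a}}(M)$ denote the common cost of the equilibrium of $\Gamma_M^{\boldsymbol{a}}$. The key estimate is
\begin{equation*}
\lambda^{\boldsymbol{a}}(M)\leq \lambda(M)+A,\qquad A:=\max_i a_i,
\end{equation*}
valid for all $M$ large enough that both equilibria use every edge (which holds once the common costs exceed $\max_i c_i(0)+A$). Indeed, the equilibrium flows are $\eqflow{x}_i=c_i^{-1}(\lambda(M))$ in $\Gamma_M$ and $\eqflow{y}_i=c_i^{-1}(\lambda^{\boldsymbol{a}}(M)-a_i)$ in $\Gamma_M^{\boldsymbol{a}}$. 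If the inequality failed, then $\lambda^{\boldsymbol{a}}(M)-a_i>\lambda(M)$ for every $i$, and strict monotonicity of each $c_i$ would give $\eqflow{y}_i>\eqflow{x}_i$ for all $i$, contradicting $\sum_i\eqflow{y}_i=M=\sum_i\eqflow{x}_i$.

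\textbf{Step 3: conclusion.} From Step 2, $\WEq(\Gamma_M^{\boldsymbol{a}})=M\lambda^{\boldsymbol{a}}(M)\leq \WEq(\Gamma_M)+MA$. On the other hand, since $a_i\geq 0$ implies $C^{\boldsymbol{a}}(\boldsymbol{x})\geq C(\boldsymbol{x})$ for every feasible flow, one has $\Opt(\Gamma_M^{\boldsymbol{a}})\geq\Opt(\Gamma_M)$ trivially. Combining,
\begin{equation*}
1\leq \PoA(\Gamma_M^{\boldsymbol{a}})\leq\frac{\WEq(\Gamma_M)+MA}{\Opt(\Gamma_M)}=\PoA(\Gamma_M)+\frac{MA}{\Opt(\Gamma_M)},
\end{equation*}
and Step 1 together with the hypothesis forces the right-hand side to tend to $1$.

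The main obstacle is Step 2: it relies essentially on the parallel structure (the very notion of a common path cost $\lambda(M)$ shared across all active paths would fail on a general network) and on being able to invert each cost function, which is where strict monotonicity, continuity, and unboundedness of the $c_i$ get used. The rest is bookkeeping once that comparison is in place.
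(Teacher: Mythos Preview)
Your proof is correct and follows essentially the same route as the paper: compare the common equilibrium costs $\lambda$ and $\lambda^{\boldsymbol a}$ via the inverses $c_i^{-1}$, use that $\lambda(M)\to\infty$, and bound $\Opt(\Gamma_M^{\boldsymbol a})$ below by (a shift of) $\Opt(\Gamma_M)$. The only differences are cosmetic simplifications on your side---you use the one-sided bound $\lambda^{\boldsymbol a}\le\lambda+A$ (the paper derives the two-sided $\lambda+\underline a\le\lambda^{\boldsymbol a}\le\lambda+\bar a$ via the function $g(\lambda)=\sum_i c_i^{-1}(\lambda)$) and the trivial $\Opt(\Gamma_M^{\boldsymbol a})\ge\Opt(\Gamma_M)$ rather than $\Opt(\Gamma_M^{\boldsymbol a})\ge\underline a\,M+\Opt(\Gamma_M)$---but the skeleton of the argument is the same.
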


\subsubsection{Regularly varying functions}

\begin{definition}
Let $\beta \ge 0$.
A function $\Theta : (0, +\infty) \to (0, +\infty)$ is called \emph{$\beta$-regularly varying} if for all $a>0$
\begin{equation*}
\lim_{x \to \infty} \frac{\Theta(a\cdot x)}{\Theta(x)} = a^{\beta} \in (0,+\infty).
\end{equation*}
When $\beta=1$, we just say that the function is regularly varying.
\end{definition}

The following theorem shows that asymptotically the price of anarchy goes to $1$ for a large class of cost functions.

\begin{theorem}\label{th:regvar1}
Consider the game $\Gamma_{M}$ and suppose that for some $\beta>0$ there exists a $\beta$-regularly varying function $c(\cdot) \in C^{1}$ such that the function $x\mapsto c(x)+xc'(x)$ is strictly increasing and for all $e_{i}\in E$ the function $c_i(\cdot)$ is strictly increasing and continuous with 
\begin{equation}\label{eq:regvar1}
\lim_{x \to \infty} \frac{c^{-1} \circ c_i(x)}{x} = \alpha_i \in (0,+\infty]
\end{equation}
and that at least one $\alpha_{i}$ is finite.
Then 
\begin{equation*}
\lim_{M \to \infty} \PoA(\Gamma_{M}) = 1.
\end{equation*}
\end{theorem}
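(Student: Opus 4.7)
The plan is to rescale flows by $M$ and costs by $c(M)$, show that the rescaled equilibrium and optimum problems converge (in a variational sense) to a common limit problem on the simplex, and then read off that $\PoA(\Gamma_M)\to 1$.

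\textbf{Step 1: Asymptotic form of the costs.} Let $I=\{i:\alpha_i<\infty\}$, which is nonempty by hypothesis. For $i\in I$, the first task is to translate \eqref{eq:regvar1} into the pointwise equivalence $c_i(x)\sim\alpha_i^\beta\,c(x)$: writing $y_x=c^{-1}(c_i(x))$, one has $y_x/x\to\alpha_i$; applying $c$ and using $\beta$-regular variation yields $c_i(x)=c(y_x)\sim c(\alpha_ix)\sim\alpha_i^\beta c(x)$. For $i\notin I$ the same argument gives $c_i(x)/c(x)\to\infty$. The uniform convergence theorem for regularly varying functions \cite{BinGolTeu:CUP1989} then promotes this to $c_i(Mz)/c(M)\to\alpha_i^\beta z^\beta$ uniformly for $z$ in compact subsets of $(0,\infty)$ when $i\in I$. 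In particular, each $c_i$ (for $i\in I$) is itself $\beta$-regularly varying.

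\textbf{Step 2: Rescale and identify the limit problem.} Set $z_i=x_i/M$, so that the feasible set becomes the simplex $\Delta=\{z\ge 0:\sum_i z_i=1\}$. By Step 1,
\begin{equation*}
\frac{C(Mz)}{Mc(M)}=\sum_i z_i\,\frac{c_i(Mz_i)}{c(M)}\longrightarrow F(z):=\sum_{i\in I}\alpha_i^\beta z_i^{1+\beta}
\end{equation*}
provided $z_i=0$ for every $i\notin I$ (and $+\infty$ otherwise). Since Wardrop equilibria on a parallel graph minimize Beckmann's potential $\Phi(x)=\sum_i\int_0^{x_i}c_i(u)\diff u$, and since $c_i$ is $\beta$-regularly varying, Karamata's theorem gives $\int_0^x c_i(u)\diff u\sim\alpha_i^\beta\,xc(x)/(1+\beta)$, whence the same rescaling yields $\Phi(Mz)/(Mc(M))\to F(z)/(1+\beta)$. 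The two limit problems $\min_\Delta F$ (governing $\Opt$) and $\min_\Delta F/(1+\beta)$ (governing $\WEq$) differ only by a positive constant and hence share the same, unique (since $F$ is strictly convex on $\Delta$) minimizer, obtainable from the first-order conditions:
\begin{equation*}
z_i^*=\frac{\alpha_i^{-1}}{\sum_{j\in I}\alpha_j^{-1}}\text{ for }i\in I,\qquad z_i^*=0\text{ for }i\notin I.
\end{equation*}

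\textbf{Step 3: Pass to the limit.} Using epi-convergence on the compact simplex $\Delta$, in the spirit of the argument concluding the proof of Theorem~\ref{th:ConstantCost}, both the rescaled optimum flow $\optflow{\boldsymbol{x}}/M$ and the rescaled equilibrium flow $\eqflow{\boldsymbol{x}}/M$ converge to $z^*$, and consequently
\begin{equation*}
\frac{\WEq(\Gamma_M)}{Mc(M)}\longrightarrow F(z^*),\qquad \frac{\Opt(\Gamma_M)}{Mc(M)}\longrightarrow F(z^*).
\end{equation*}
Dividing the two relations yields $\PoA(\Gamma_M)\to 1$.

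\textbf{Main obstacle.} The delicate step is excluding a positive asymptotic share of flow on edges with $\alpha_i=\infty$ in either the equilibrium or the optimum, because the argument of Step 2 identifies the limit problem only on the face of $\Delta$ supported on $I$. Here one combines the equilibrium condition (equality of used-path costs) or the first-order optimality condition (equality of marginal costs) with the blow-up $c_i(Mz_i)/c(M)\to\infty$ whenever $z_i$ stays bounded below and $i\notin I$: such a share would force $c_i(Mz_i)$ to dominate $c_j(Mz_j)$ for any $j\in I$, contradicting the equilibrium/optimality condition. For the optimum case the hypothesis that $c(x)+xc'(x)$ is strictly increasing plays its role, since via the monotone density theorem it yields $xc'(x)\sim\beta c(x)$, so that the marginal costs $c_i(x)+xc_i'(x)\sim(1+\beta)\alpha_i^\beta c(x)$ satisfy the same asymptotic scaling as the $c_i$ themselves, allowing the same exclusion argument to go through.
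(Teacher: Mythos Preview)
Your route is genuinely different from the paper's. The paper proceeds by an $\varepsilon$-sandwich: from $c^{-1}(c_i(x))/x\in(\alpha_i-\varepsilon,\alpha_i+\varepsilon)$ it gets $\WEq\le Mc(Ma(\varepsilon))$ directly from the equilibrium equations, and $\Opt\ge Mc(Mb(\varepsilon))$ by replacing each $c_i$ with the smooth surrogate $x\mapsto c((\alpha_i-\varepsilon)x)$ and solving that auxiliary problem exactly via the first-order condition $\tilde c(y_i)=\mu$ with $\tilde c(x)=c(x)+xc'(x)$; then $\PoA\le(a(\varepsilon)/b(\varepsilon))^\beta\to 1$ as $\varepsilon\to 0$. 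Your rescaling-plus-epi-convergence scheme is more structural, and in fact once epi-convergence of $C(Mz)/(Mc(M))$ to $F$ on the compact simplex is established, $\Opt/(Mc(M))\to F(z^*)$ follows without any first-order conditions---so the hypothesis that $c+xc'$ be strictly increasing plays no role in your line of reasoning.

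There is one genuine slip. In the final paragraph you write the marginal costs as $c_i(x)+xc_i'(x)$ and feed them through the monotone density theorem, but the $c_i$ are only assumed continuous and strictly increasing, not differentiable; only the reference function $c$ is $C^1$. This is precisely why the paper never differentiates $c_i$ but instead lower-bounds the optimum by the problem with smooth costs $c((\alpha_i-\varepsilon)x)$ before writing first-order conditions. In your framework that whole discussion is unnecessary anyway: epi-convergence already forces optimal flows onto the face $\{z_i=0:i\notin I\}$ and delivers convergence of the optimal value. The place that does need care is the ``consequently'' in Step~3 for $\WEq$: convergence of the Beckmann minimizers $\eqflow{\boldsymbol{x}}/M\to z^*$ does not by itself give $C(\eqflow{\boldsymbol{x}})/(Mc(M))\to F(z^*)$, since the rescaled cost functional need not converge continuously at boundary points of $\Delta$. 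Your first suggestion there---use the equilibrium identity $\WEq=M\lambda$ with $\lambda=c_j(\eqflow{x}_j)$ for some $j\in I$, so that $\eqflow{x}_j/M\to z_j^*>0$ together with Step~1 gives $\lambda/c(M)\to\alpha_j^\beta(z_j^*)^\beta=F(z^*)$---is exactly right and closes the argument without ever writing $c_i'$.
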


\begin{proof}
We begin by noting that if some cost $c_i(\cdot)$ is bounded, then the result follows directly from Theorem~\ref{th:ConstantCost}. 
Suppose now that $c_i(x)\to\infty$ when $x\to\infty$
in all links and consider first the case where all the $\alpha_i$ are finite. In this case the equilibrium flows $\eqflow{x}_i$ must diverge to $\infty$ as $M\to\infty$ and the 
equilibrium is characterized by $c_i(\eqflow{x}_i)=\lambda$. This allows to derive an upper bound for the 
cost of the equilibrium. That is, \eqref{eq:regvar1} implies that for small $\varepsilon>0$ we have 
\begin{equation*}
\frac{c^{-1}\circ c(\eqflow{x}_i)}{\eqflow{x}_i}=\frac{c^{-1}(\lambda)}{\eqflow{x}_i}\in(\alpha_i-\varepsilon,\alpha_i+\varepsilon),
\end{equation*}
provided $M$ is large enough. It then follows that
\begin{equation*}
\sum_{i=1}^n\frac{c^{-1}(\lambda)}{\alpha_i+\varepsilon}\leq\sum_{i=1}^n\eqflow{x}_i=M,
\end{equation*}
so that, denoting 
\begin{equation*}
a(\varepsilon)=\left(\sum_{i=1}^n\frac{1}{\alpha_i+\varepsilon}\right)^{-1},
\end{equation*} 
we get 
$\lambda\leq c(Ma(\varepsilon))$ and 
\begin{equation*}
\WEq=M\lambda\leq M c(Ma(\varepsilon)).
\end{equation*}

Next we derive a lower bound for the optimal cost
\begin{equation*}
\Opt(\Gamma_M)=\min_{x\in\mathcal{F}_M}\sum_{i=1}^n x_ic_i(x_i).
\end{equation*}
We note that when $M\to\infty$ the optimal solutions are such that $x_i(M)\to\infty$ so that using \eqref{eq:regvar1} and the fact that $\alpha_i-\varepsilon>0$ we get for all $M$ large enough
\begin{equation*}
\min_{x\in\mathcal{F}_M}\sum_{i=1}^n x_ic_i(x_i)\geq \min_{x\in\mathcal{F}_M}\sum_{i=1}^n x_ic((\alpha_i-\varepsilon)x_i).
\end{equation*}
The optimality condition for the latter yields
\begin{equation*}
c((\alpha_i-\varepsilon)x_i)+(\alpha_i-\varepsilon)x_i c'((\alpha_i-\varepsilon)x_i)=\mu.
\end{equation*}
For the sake of brevity we denote $\tilde c(x)=c(x)+xc'(x)$ and $y_i=(\alpha_i-\varepsilon)x_i$
so that the optimality condition becomes $\tilde c(y_i)=\mu$. This yields $y_i=\tilde c^{-1}(\mu)$
and therefore
\begin{equation*}
M=\sum_{i=1}^n x_i=\sum_{i=1}^n\frac{\tilde c^{-1}(\mu)}{\alpha_i-\varepsilon}.
\end{equation*}
Denoting 
\begin{equation*}
b(\varepsilon)=\left(\sum_{i=1}^n\frac{1}{\alpha_i-\varepsilon}\right)^{-1},
\end{equation*}
we then get 
$\mu=\tilde c(Mb(\varepsilon))$ and we obtain the following lower bound for the optimal cost
\begin{equation*}
\Opt(\Gamma_M)\geq \min_{x\in\mathcal{F}_M}\sum_{i=1}^n x_ic((\alpha_i-\varepsilon)x_i)=Mc(\tilde{c}^{-1}(\mu))=Mc(Mb(\varepsilon)).
\end{equation*}
Combining the previous bounds we obtain the following estimate for the price of anarchy
\begin{equation*}
\PoA(\Gamma_M)\leq\frac{Mc(Ma(\varepsilon))}{Mc(Mb(\varepsilon))}.
\end{equation*}
Letting $M\to\infty$ and using the fact that $c$ is $\beta$-regularly varying we deduce
\begin{equation*}
\limsup_{M\to\infty}\PoA(\Gamma_M)\leq \left(\frac{a(\varepsilon)}{b(\varepsilon)}\right)^\beta
\end{equation*}
and since $a(\varepsilon)/b(\varepsilon)\to 1$ as $\varepsilon\to 0$ we conclude
\begin{equation*}
\limsup_{M\to\infty}\PoA(\Gamma_M)=1. 
\end{equation*} 

If some $\alpha_{i}=\infty$, then call $I_{0}:=\{i:\alpha_{i}<\infty\}$. In equilibrium
\begin{equation*}
M = \sum_{i=1}^{n}c_{i}^{-1}(\lambda) \ge \sum_{i\in I_{0}} c_{i}^{-1}(\lambda) \ge \sum_{i\in I_{0}}\frac{1}{\alpha_{i}+\varepsilon} c^{-1}(\lambda),
\end{equation*}
hence
\begin{equation*}
\lambda \le c\left(M \left(\sum_{i\in I_{0}} \frac{1}{\alpha_{i}+\varepsilon}\right)^{-1}\right).
\end{equation*}
In the optimum proceed as before with $\alpha_{i}' \nearrow \alpha_{i}$. \qed
\end{proof}
The following results follow easily from Theorem~\ref{th:regvar1}.

\begin{corollary}\label{co:somefinite}
In the game $\Gamma_{M}$ if for all $i \in E$ we have
$\lim_{x \to \infty} c_i(x)/x = m_i \in (0,+\infty]$ 
and at least one $m_{i}<\infty$, then
\begin{equation*}
\lim_{M \to \infty} \PoA(\Gamma_{M}) = 1.
\end{equation*}
\end{corollary}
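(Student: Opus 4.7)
The plan is to derive Corollary~\ref{co:somefinite} as a direct specialization of Theorem~\ref{th:regvar1} by choosing the reference regularly varying function to be the identity $c(x)=x$. First I would check the structural hypotheses on $c(\cdot)$: the identity is $1$-regularly varying (indeed $c(ax)/c(x)=a$ for every $a>0$), it is clearly $C^{1}$, and the map $x\mapsto c(x)+xc'(x)=2x$ is strictly increasing on $(0,\infty)$, so the requirement on $\tilde c$ appearing in Theorem~\ref{th:regvar1} is automatic.

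Next I would translate the asymptotic condition \eqref{eq:regvar1}. With $c(x)=x$ we have $c^{-1}(y)=y$, so
\begin{equation*}
\lim_{x\to\infty}\frac{c^{-1}\circ c_{i}(x)}{x}=\lim_{x\to\infty}\frac{c_{i}(x)}{x}=m_{i},
\end{equation*}
and setting $\alpha_{i}:=m_{i}\in(0,+\infty]$ the hypothesis of the corollary becomes precisely the hypothesis of the theorem, including the requirement that at least one $\alpha_{i}$ is finite. Note also that the assumption $m_{i}>0$ forces $c_{i}(x)\to\infty$, so there are no bounded costs to worry about; if one did wish to allow $c_{i}$ bounded, Theorem~\ref{th:ConstantCost} would cover that case separately, exactly as in the first lines of the proof of Theorem~\ref{th:regvar1}.

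Finally, since each $c_{i}$ is strictly increasing and continuous (as assumed in the framework of this section on parallel graphs, matching the hypotheses of Theorem~\ref{th:regvar1}), all the assumptions of Theorem~\ref{th:regvar1} are in force, and the conclusion $\lim_{M\to\infty}\PoA(\Gamma_{M})=1$ follows immediately. There is essentially no obstacle here beyond bookkeeping: the corollary is the $\beta=1$ affine-reference instance of the theorem, so the proof is a one-line reduction once the identifications $c(x)=x$ and $\alpha_{i}=m_{i}$ are made.
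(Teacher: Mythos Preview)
Your proposal is correct and follows exactly the same approach as the paper: the paper's proof is the single line ``Apply Theorem~\ref{th:regvar1} with $c$ equal to the identity,'' and you have simply spelled out the verifications that make this application legitimate (that $c(x)=x$ is $1$-regularly varying, that $x\mapsto 2x$ is strictly increasing, and that $c^{-1}\circ c_i(x)/x=c_i(x)/x\to m_i=\alpha_i$).
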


\begin{corollary}\label{co:ciprime}
In the game $\Gamma_{M}$ if for all $i \in E$ we have
 $\lim_{x\to\infty}c_i^{\prime}(x) = m_i$ with $m_i \in (0,+\infty]$ and at least one $m_{i}$ is finite, then 
\begin{equation*}
\lim_{M \to \infty} \PoA(\Gamma_{M}) = 1.
\end{equation*}

\end{corollary}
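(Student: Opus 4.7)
The plan is to deduce this corollary directly from Corollary~\ref{co:somefinite}, by converting the hypothesis on the derivatives $c_i'$ into the corresponding hypothesis on the ratios $c_i(x)/x$. Concretely, I would show that if $\lim_{x\to\infty}c_i'(x)=m_i$ with $m_i\in(0,+\infty]$, then $\lim_{x\to\infty}c_i(x)/x=m_i$, and then observe that the assumption that at least one $m_i$ is finite is preserved. Corollary~\ref{co:somefinite} then yields the conclusion.

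For the passage from $c_i'$ to $c_i(x)/x$, I would argue as follows. Since the cost functions are nondecreasing and the hypothesis implicitly assumes $c_i$ is differentiable (or at least has a limiting derivative at infinity), a clean route is a mean value / L'Hôpital argument. In the finite case $m_i\in(0,\infty)$, $c_i'\to m_i>0$ forces $c_i(x)\to\infty$, so L'Hôpital gives
\[
\lim_{x\to\infty}\frac{c_i(x)}{x}=\lim_{x\to\infty}c_i'(x)=m_i.
\]
In the case $m_i=+\infty$, one can avoid appealing to a version of L'Hôpital for infinite limits by using the mean value theorem directly: for any $K>0$, choose $x_0$ with $c_i'(t)\geq K$ for all $t\geq x_0$; then
\[
\frac{c_i(x)}{x}\geq \frac{c_i(x)-c_i(x_0)}{x}\geq K\cdot\frac{x-x_0}{x}\longrightarrow K,
\]
as $x\to\infty$, and since $K$ is arbitrary, $c_i(x)/x\to\infty=m_i$.

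With this in hand, the hypotheses of Corollary~\ref{co:somefinite} are satisfied verbatim: all $m_i$ lie in $(0,+\infty]$ and at least one is finite. Applying Corollary~\ref{co:somefinite} gives $\lim_{M\to\infty}\PoA(\Gamma_M)=1$. The main (and only) obstacle is really the careful handling of the $m_i=+\infty$ case, where one cannot blindly cite L'Hôpital but must instead give the short mean value estimate above; beyond this, the corollary is essentially a packaging of Corollary~\ref{co:somefinite} together with an elementary derivative-to-slope conversion.
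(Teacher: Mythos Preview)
Your proposal is correct and follows exactly the same approach as the paper: reduce to Corollary~\ref{co:somefinite} by observing that $\lim_{x\to\infty}c_i'(x)=m_i$ implies $\lim_{x\to\infty}c_i(x)/x=m_i$. You supply more detail than the paper's one-line proof (in particular the careful mean-value argument for the $m_i=+\infty$ case), but the route is identical.
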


\begin{corollary}\label{co:rvc-1}
In the game $\Gamma_{M}$ if for all $i \in E$ for some $\beta>0$ there exists a $\beta$-regularly varying function $c(\cdot)$ such that
\begin{equation}\label{eq:regvar_cor2}
\lim_{x \to \infty} \frac{c_i(x)}{c(x)} = m_i \in (0,+\infty],
\end{equation}
and at least one $m_{i}$ is finite, then 
\begin{equation*}
\lim_{M \to \infty} \PoA(\Gamma_{M}) = 1.
\end{equation*}
\end{corollary}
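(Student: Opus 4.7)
The natural strategy is to reduce Corollary~\ref{co:rvc-1} to Theorem~\ref{th:regvar1} by verifying hypothesis \eqref{eq:regvar1} with $\alpha_{i}:=m_{i}^{1/\beta}$ (setting $(+\infty)^{1/\beta}:=+\infty$). Since at least one $m_{i}$ is finite by assumption, at least one $\alpha_{i}$ will then be finite, which is the critical missing piece of the theorem's hypothesis.

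The first step is to obtain a well-behaved representative for $c$. The hypothesis \eqref{eq:regvar_cor2} only asserts that $c$ is $\beta$-regularly varying, whereas Theorem~\ref{th:regvar1} also requires $c\in C^{1}$ with $x\mapsto c(x)+xc'(x)$ strictly increasing. I would invoke the Smooth Variation Theorem (see \cite{BinGolTeu:CUP1989}) to replace $c$ by a $C^{\infty}$ $\beta$-regularly varying function $\tilde c$ satisfying $\tilde c(x)/c(x)\to 1$ and $x\tilde c'(x)/\tilde c(x)\to\beta$. Since $\beta>0$, this gives $\tilde c(x)+x\tilde c'(x)\sim(1+\beta)\tilde c(x)$, which is eventually strictly increasing; a modification on a bounded initial interval yields strict monotonicity throughout $(0,\infty)$. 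The asymptotic equivalence $\tilde c\sim c$ then transfers the hypothesis of the corollary: $c_{i}(x)/\tilde c(x)\to m_{i}$ for each $i$.

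The second step is to translate these ratios into the inverse form required by \eqref{eq:regvar1}. Since $\tilde c$ is strictly increasing, $\beta$-regularly varying, and tends to $+\infty$, its inverse $\tilde c^{-1}$ is $(1/\beta)$-regularly varying (standard in \cite{BinGolTeu:CUP1989}). For each finite $m_{i}$ and every small $\varepsilon>0$, the two-sided bound $(m_{i}-\varepsilon)\tilde c(x)\leq c_{i}(x)\leq(m_{i}+\varepsilon)\tilde c(x)$ eventually holds; applying the monotone $\tilde c^{-1}$ and invoking its regular variation gives
\begin{equation*}
\frac{\tilde c^{-1}(c_{i}(x))}{x}\longrightarrow m_{i}^{1/\beta}\in(0,+\infty).
\end{equation*}
The case $m_{i}=+\infty$ is handled analogously by using one-sided bounds $c_{i}(x)\geq K\tilde c(x)$ for arbitrarily large $K$, yielding $\alpha_{i}=+\infty$. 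With \eqref{eq:regvar1} verified and at least one $\alpha_{i}$ finite, Theorem~\ref{th:regvar1} applied to the smoothed function $\tilde c$ delivers $\lim_{M\to\infty}\PoA(\Gamma_{M})=1$.

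The main obstacle I anticipate is the smoothing step: producing a single $\tilde c\in C^{1}$ that simultaneously preserves the ratios $c_{i}/\tilde c\to m_{i}$ and supplies the strict monotonicity of $\tilde c+x\tilde c'$ on all of $(0,\infty)$. The Smooth Variation Theorem is designed precisely for this, but ensuring global (rather than merely eventual) monotonicity, and correctly handling the boundary case $m_{i}=+\infty$ alongside the finite indices, require some careful bookkeeping.
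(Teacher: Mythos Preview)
Your core argument coincides with the paper's: sandwich $c_{i}$ between $(m_{i}\pm\varepsilon)c$, apply the monotone $c^{-1}$, use that $c^{-1}$ is $(1/\beta)$-regularly varying (the paper cites its Lemma~\ref{le:definitions_reg_var}\ref{it:le:definitions_reg_var-b}) to obtain $\frac{c^{-1}(c_{i}(x))}{x}\to m_{i}^{1/\beta}=:\alpha_{i}$, handle $m_{i}=+\infty$ by a one-sided bound, and invoke Theorem~\ref{th:regvar1}.

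The one genuine difference is your preliminary smoothing step. The paper's proof applies Theorem~\ref{th:regvar1} directly to the given $c$, silently assuming that $c$ already satisfies the extra hypotheses of that theorem ($c\in C^{1}$, $c$ strictly increasing, and $x\mapsto c(x)+xc'(x)$ strictly increasing). As the corollary is stated, only $\beta$-regular variation is assumed, so strictly speaking this is a gap that your invocation of the Smooth Variation Theorem closes. Your sketch of why $\tilde c+x\tilde c'$ is eventually strictly increasing (via $x\tilde c'/\tilde c\to\beta>0$ and higher-order smooth variation) is correct in spirit; the only thing to watch is that the initial-interval modification you mention must keep $\tilde c$ strictly increasing and $C^{1}$ globally while not disturbing the asymptotics---routine, but worth spelling out. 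In short: same route as the paper, with an added (and justified) regularity repair.
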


\begin{corollary}\label{co:affinecosts}
In the game $\Gamma_{M}$ if, for all $e_{i}\in E$, $c_{i}(x)=a_{i}+b_{i}x$, then
\begin{equation*}
\lim_{M \to \infty} \PoA(\Gamma_{M}) = 1.
\end{equation*}
\end{corollary}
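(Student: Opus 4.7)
The plan is to reduce the corollary to the previously established results by a short case split on whether every slope $b_i$ is strictly positive.

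\textbf{Case 1: some $b_i = 0$.} Then the edge $e_i$ has constant cost $c_i(x)=a_i$, so it is bounded. Since the network is parallel, the single-edge path $P=\{e_i\}$ satisfies $c_P^\infty = a_i < \infty$, hence $B = \min_{P\in\mathcal{P}} c_P^\infty \le a_i < \infty$. Theorem~\ref{th:ConstantCost} then gives $\lim_{M\to\infty}\PoA(\Gamma_M)=1$.

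\textbf{Case 2: all $b_i > 0$.} For affine costs $c_i(x)=a_i+b_i x$ we have
\begin{equation*}
\lim_{x\to\infty}\frac{c_i(x)}{x}=\lim_{x\to\infty}\left(\frac{a_i}{x}+b_i\right)=b_i,
\end{equation*}
so each limit $m_i = b_i$ lies in $(0,+\infty)\subset (0,+\infty]$, and every $m_i$ is finite. The hypotheses of Corollary~\ref{co:somefinite} are satisfied (with at least one — in fact every — $m_i$ finite), so $\lim_{M\to\infty}\PoA(\Gamma_M)=1$.

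Combining the two cases yields the corollary. There is no real obstacle here: the only point that needs care is to notice that the $b_i=0$ situation falls outside Corollary~\ref{co:somefinite} (whose assumption $m_i\in (0,+\infty]$ forbids $m_i=0$) and must instead be routed through Theorem~\ref{th:ConstantCost} using the bounded constant edge. One could alternatively handle Case 2 via Corollary~\ref{co:ciprime} (since $c_i'(x)=b_i$) or via Corollary~\ref{co:rvc-1} with the reference function $c(x)=x$, which is $1$-regularly varying.
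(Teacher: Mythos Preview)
Your argument is correct and follows the paper's approach: the paper's proof is the single line ``The conditions in Corollary~\ref{co:somefinite} are satisfied,'' which is exactly your Case~2. Your additional Case~1 (some $b_i=0$, handled via Theorem~\ref{th:ConstantCost}) is a genuine improvement in rigor: the paper's one-line proof tacitly assumes all $b_i>0$, consistent with the strict-increase hypothesis inherited from Theorem~\ref{th:regvar1}, whereas you correctly observe that $b_i=0$ gives $m_i=0\notin(0,+\infty]$ and so must be routed through the bounded-path result instead.
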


\subsubsection{Costs bounded by affine functions}

The next theorem examines the case where each cost function is bounded above and below by two affine functions with the same slope, as in Figure~\ref{fi:affinebound}.

\begin{theorem}\label{th:affinebounds}
Consider the game $\Gamma_{M}$ and assume that for every $e_{i}\in E$
\begin{equation*}
\ell_{i}(x):=a_{i}+b_{i}x \le c_{i}(x) \le \alpha_{i}+b_{i}x=:L_{i}(x).
\end{equation*}
Then
\begin{equation*}
\lim_{M\to\infty}\PoA(\Gamma_{M}) = 1.
\end{equation*}
\end{theorem}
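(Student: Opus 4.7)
The plan is to reduce to the affine case of Corollary~\ref{co:affinecosts} by exploiting that the deviation $c_i(x)-\ell_i(x)\in[0,\alpha_i-a_i]$ is uniformly bounded. First I dispose of the degenerate case: if $b_i=0$ for some $i$, then $c_i(x)\le\alpha_i<\infty$ is bounded, so the single-edge path $e_i$ has finite asymptotic cost $c_{e_i}^\infty\le\alpha_i$ and Theorem~\ref{th:ConstantCost} yields the conclusion directly. From now on I assume $b_i>0$ for every $i$ and set $S:=\sum_{i=1}^n 1/b_i<\infty$.

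Next I analyze the equilibrium. Since at least one used edge $i^{*}$ satisfies $\eqflow{x}_{i^{*}}\ge M/n$, the common latency $\lambda$ obeys $\lambda\ge a_{i^{*}}+b_{i^{*}}M/n\to\infty$; eventually $\lambda>\max_j\alpha_j\ge c_j(0)$, which forces every edge to be used at equilibrium. Writing $c_i(x)=b_i x + h_i(x)$ with $h_i(x)\in[a_i,\alpha_i]$, the equilibrium conditions $c_i(\eqflow{x}_i)=\lambda$ rewrite as $\eqflow{x}_i=(\lambda-h_i(\eqflow{x}_i))/b_i$, and summing over $i$ gives $M=\lambda S-\sum_i h_i(\eqflow{x}_i)/b_i$. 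Since $\sum_i h_i(\eqflow{x}_i)/b_i$ lies in the bounded interval $[\sum_i a_i/b_i,\sum_i\alpha_i/b_i]$ independently of $M$, we conclude $\lambda=M/S+O(1)$ and therefore $\WEq(\Gamma_M)=M\lambda=M^2/S+O(M)$.

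For the optimum I use the sandwich induced by $\ell_i\le c_i\le L_i$, which yields $\Opt(\Gamma_M^\ell)\le\Opt(\Gamma_M)\le\Opt(\Gamma_M^L)$, where $\Gamma_M^\ell$ and $\Gamma_M^L$ are the two affine games associated to the bounding costs. For any feasible flow $\boldsymbol{x}$, the gap $C^L(\boldsymbol{x})-C^\ell(\boldsymbol{x})=\sum_i(\alpha_i-a_i)x_i\le M\max_i(\alpha_i-a_i)$, so $\Opt(\Gamma_M^L)-\Opt(\Gamma_M^\ell)=O(M)$ and consequently $\Opt(\Gamma_M)=\Opt(\Gamma_M^\ell)+O(M)$. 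A Cauchy--Schwarz estimate $\sum_i b_i x_i^2\ge(\sum_i x_i)^2/S=M^2/S$, combined with the explicit evaluation of $C^\ell$ on the probe flow $x_i=M/(b_i S)$, shows $\Opt(\Gamma_M^\ell)=M^2/S+O(M)$. Therefore $\Opt(\Gamma_M)=M^2/S+O(M)$, and $\PoA(\Gamma_M)=(M^2/S+O(M))/(M^2/S+O(M))\to 1$.

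The main technical point is not any single deep estimate but the careful bookkeeping ensuring that both $\WEq(\Gamma_M)$ and $\Opt(\Gamma_M)$ share the common leading quadratic term $M^2/S$ with an $O(M)$ correction; this is exactly what makes the ratio converge to $1$ even though the equilibrium and optimal flows of $\Gamma_M$ do not coincide with those of either bounding affine game. The role of the assumption that $\ell_i$ and $L_i$ have the same slope $b_i$ is essential here: it confines the deviation $h_i$ to a bounded range, which in turn keeps the corrections to the leading $M^2/S$ term at the lower order $O(M)$.
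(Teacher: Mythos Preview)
Your proof is correct, but it takes a noticeably different route from the paper's. The paper's proof is a single line: since $a_i+b_ix\le c_i(x)\le\alpha_i+b_ix$ implies $c_i(x)/x\to b_i$, one simply invokes Corollary~\ref{co:somefinite} (taking $c$ to be the identity in Theorem~\ref{th:regvar1}), which handles exactly the situation where all $\lim_{x\to\infty}c_i(x)/x$ lie in $(0,\infty]$ with at least one finite. Your argument instead bypasses the regular-variation machinery entirely and computes both $\WEq(\Gamma_M)$ and $\Opt(\Gamma_M)$ to leading order, showing each equals $M^2/S+O(M)$ with $S=\sum_i 1/b_i$.

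What your approach buys is that it is fully self-contained and elementary (no appeal to Theorem~\ref{th:regvar1}), and it yields quantitative information the paper's proof does not: the explicit leading constant $1/S$ and the order $O(M)$ of the correction, hence a rate $\PoA(\Gamma_M)=1+O(1/M)$. What the paper's route buys is brevity, since all the work has already been packaged into Corollary~\ref{co:somefinite}. Both treat the degenerate case $b_i=0$ identically, via Theorem~\ref{th:ConstantCost}.
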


\section{Ill behaved games}\label{se:illbehaved}

In this section we will consider some examples where the price of anarchy is not asymptotic to one, as the inflow goes to infinity.

%
%
%

Consider a standard Pigou graph and assume that the costs are as follows: 
\begin{gather}\label{eq:cidstep}
\begin{aligned}
c_{1}(x)&=x, \\
c_{2}(x)&= a^{k} \quad\text{for }x \in (a^{k-1},a^{k}],\quad k\in\mathbb{Z},
\end{aligned}
\end{gather} 
with $a\ge 2$, as in Figure~\ref{fi:stepfunction}.
In this game the cost of one edge is the identity, whereas for the other edge it is a step function that touches the identity at intervals that grow exponentially.
The cost function $c_{2}$ is not continuous, but a very similar game can be constructed by approximating it with a continuous function.

\begin{figure}[h]
\centering
{%
\begin{tikzpicture}[scale=.6]

    \draw [thin, gray, ->] (0,-1) -- (0,10)      
        node [above, black] {$y$};              

    \draw [thin, gray, ->] (-1,0) -- (10,0)      
        node [right, black] {$x$};              

    \draw [draw=blue,thick,mylabel=at 0.7 below right with {$c_{e_1}(x)$}] (0,0) -- (10,10);

    \draw [draw=black,thick] (1/8,1/8) |- (1/4,1/4);
    \draw [draw=black,thick] (1/4,1/4) |- (1/2,1/2);
    \draw [draw=black,thick] (1/2,1/2) |- (1,1);    
    \draw [draw=black,thick] (1,1) |- (2,2);
    \draw [draw=black,thick] (2,2) |- (4,4);
    \draw [draw=black,thick,mylabel=at 0.7 above left with {$c_{e_2}(x)$}] (4,4) |- (8,8);

\begin{customlegend}[legend entries={$c_{e_2}(x)$,$c_{e_1}(x)$,C,$d$},legend style={at={(9,3.5)},font=\footnotesize}]
    \addlegendimage{black,sharp plot}
    \addlegendimage{blue,sharp plot}
    \end{customlegend}
\end{tikzpicture}
}
~\vspace{0cm} 
\caption{Step function.}
\label{fi:stepfunction} 
\end{figure}
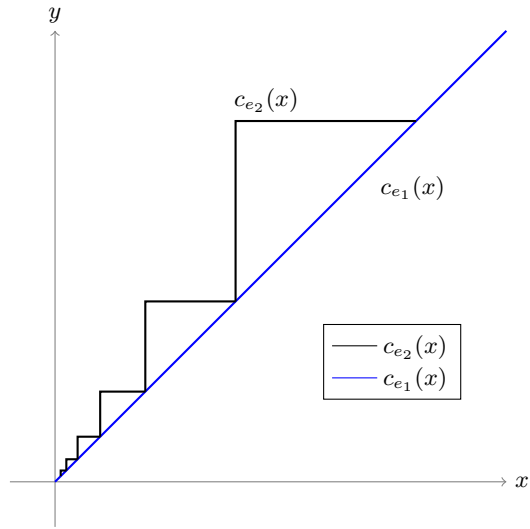

\begin{theorem}\label{th:PoAlimit}
Consider the game $\Gamma_{M}$ with costs as in \eqref{eq:cidstep}. We have
\begin{equation*}
\liminf_{M\to\infty}\PoA(\Gamma_{M})=1, \quad \limsup_{M\to\infty}\PoA(\Gamma_{M})=\frac{4+4a}{4+3a}.
\end{equation*}
\end{theorem}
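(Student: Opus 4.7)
The key observation is that the cost functions are self-similar under the scaling $x \mapsto a x$: $c_1(ax) = a \, c_1(x)$ trivially, and if $x \in (a^{k-1}, a^k]$ then $ax \in (a^k, a^{k+1}]$, so $c_2(ax) = a^{k+1} = a \, c_2(x)$. It follows that if $\boldsymbol{x}^{\ast}$ is an equilibrium (resp.\ a social optimum) of $\Gamma_M$, then $a\boldsymbol{x}^{\ast}$ is one of $\Gamma_{aM}$ with social cost multiplied by $a^2$, so $\PoA(\Gamma_{aM}) = \PoA(\Gamma_M)$. The function $M \mapsto \PoA(\Gamma_M)$ is therefore log-periodic with period $\log a$, and it suffices to compute its extrema on one fundamental interval, say $M \in (2a^k, 2a^{k+1}]$ for some fixed $k$.

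On this interval one describes the equilibrium using the continuous approximation of $c_2$ (for the raw step function, the same formulas are obtained by extending the Wardrop condition across the jumps). Two regimes arise. For $M \in (2a^k, (a+1)a^k]$, the equilibrium pins $x_2$ at the jump $a^k$ with $x_1 = M - a^k$, and both edges have common effective cost $M - a^k$, so $\WEq(\Gamma_M) = M(M - a^k)$. For $M \in ((a+1)a^k, 2a^{k+1}]$, the equilibrium sits on the $(k{+}1)$-plateau of $c_2$ with $x_1 = a^{k+1}$ and $x_2 = M - a^{k+1}$, and both costs equal $a^{k+1}$, so $\WEq(\Gamma_M) = M \, a^{k+1}$. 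The two expressions agree at $M = (a+1)a^k$.

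For the optimum, $C(\boldsymbol{x}) = (M - x_2)^2 + x_2\, c_2(x_2)$ is a convex quadratic in $x_2$ on each plateau of $c_2$, so its minimum on the plateau $x_2 \in (a^{j-1}, a^j]$ is either the free stationary point $x_2 = M - a^j/2$ (if it lies inside the plateau) or an endpoint. A short comparison reduces the global optimum to one of three candidates: $\phi_k(a^k) = (M - a^k)^2 + a^{2k}$, $\phi^{\ast}_{k+1} = a^{k+1} M - a^{2k+2}/4$ (the interior minimum on the $(k{+}1)$-plateau, valid when $M - a^{k+1}/2$ lies in it), and $\phi_{k+1}(a^{k+1}) = (M - a^{k+1})^2 + a^{2k+2}$, with the winning candidate depending on the range of $M$.

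Assembling the pieces, $\PoA(\Gamma_M)$ is a continuous piecewise-rational function of $M$, so its extrema over the fundamental interval are attained. At $M = 2a^k$ the split $x_1 = x_2 = a^k$ is simultaneously the equilibrium and the optimum, both of cost $2a^{2k}$, giving $\PoA(\Gamma_M) = 1$; by log-periodicity this yields $\liminf_{M \to \infty} \PoA(\Gamma_M) = 1$. At $M = (a+1)a^k$ one has $\WEq = (a+1)a^{2k+1}$, and the identity $a^{2k}(a^2+1) - a^{2k+1}(3a+4)/4 = a^{2k}(a-2)^2/4 \ge 0$ shows $\phi^{\ast}_{k+1} \le \phi_k(a^k)$, so $\Opt = \phi^{\ast}_{k+1} = a^{2k+1}(3a+4)/4$, giving $\PoA = 4(a+1)/(3a+4) = (4+4a)/(4+3a)$. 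An elementary monotonicity check on each of the two subintervals $(2a^k,(a+1)a^k]$ and $((a+1)a^k,2a^{k+1}]$ confirms this is the maximum over the period, so $\limsup_{M\to\infty} \PoA(\Gamma_M) = (4+4a)/(4+3a)$. The only genuine subtlety is the discontinuity of $c_2$, which forces one to work with the continuous approximation mentioned by the authors (or a generalized equilibrium concept at the jumps); this does not alter any of the estimates above.
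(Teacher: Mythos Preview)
Your log-periodicity argument and overall decomposition mirror the paper's approach, but there is a genuine error in your equilibrium-cost formula on the first subinterval. For $M\in(2a^k,(a{+}1)a^k]$ you claim $\WEq(\Gamma_M)=M(M-a^k)$, arguing that ``both edges have common effective cost $M-a^k$''. This is not the social cost of the game as defined: the social cost is $\sum_e x_e c_e(x_e)$ with the actual cost values, and since $c_2(a^k)=a^k$ (not $M-a^k$) one gets
\[
\WEq(\Gamma_M)=(M-a^k)^2+a^{2k},
\]
which is what the paper uses. Your formula is what one would obtain for a steep continuous approximation of $c_2$, where the equilibrium genuinely equalizes costs; but for the step function itself the edge-$2$ cost at the pinned flow is $a^k$, and that is what enters $C(\boldsymbol{x})$. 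The two expressions differ by $a^k(M-2a^k)>0$ on the whole subinterval and only coincide at the endpoints $M=2a^k$ and $M=(a{+}1)a^k$ (the latter because your continuous formula happens to match the right limit of the paper's discontinuous $\WEq$).

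This error undermines your ``elementary monotonicity check'': with your inflated $\WEq$, the price of anarchy on $(2a^k,\beta a^k]$ is $M(M-a^k)/\bigl((M-a^k)^2+a^{2k}\bigr)$, which is \emph{not} monotone for $a$ large enough (it has a local maximum at $M=(2+\sqrt{2})a^k$), so the verification you defer is not as routine as you suggest. With the correct formula the paper shows something much stronger and simpler: $\PoA\equiv 1$ on the entire initial interval $(2a^k,\beta a^k)$, then it increases on $[\beta a^k,(a{+}1)a^k]$ to $(4+4a^2)/(a(4+3a))$, jumps discontinuously to $(4+4a)/(4+3a)$, and decreases back to $1$. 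This makes both the $\liminf$ and the $\limsup$ immediate. Your final numbers are right, but the route to them needs the correct $\WEq$ on the first subinterval.
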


\begin{remark}\label{re:limitto0}

We can immediately see that 
\begin{equation*}
\limsup_{M\to\infty}\PoA(\Gamma_{M})=\frac{6}{5}\quad\text{for }a=2
\end{equation*}
and 
\begin{equation*}
\limsup_{M\to\infty}\PoA(\Gamma_{M})\to\frac{4}{3}\quad\text{as }a\to\infty.
\end{equation*}

The proof of Theorem~\ref{th:PoAlimit} shows that there is a periodic behavior of the price of anarchy (on a logarithmic scale). This implies that 
\begin{equation*}
\liminf_{M\to 0}\PoA(\Gamma_{M})=1, \quad \limsup_{M\to 0}\PoA(\Gamma_{M})=\frac{4+4a}{4+3a}.
\end{equation*}
That is, even for very small values of $M$ the price of anarchy is not necessarily close to $1$.
\end{remark}

Figure~\ref{fi:PoAa=3} plots the price of anarchy for $M\in[2a^{k},2a^{k+1}]$, when $a=3$.

\begin{figure}[h]
\centering
    \includegraphics[width=0.5\textwidth]{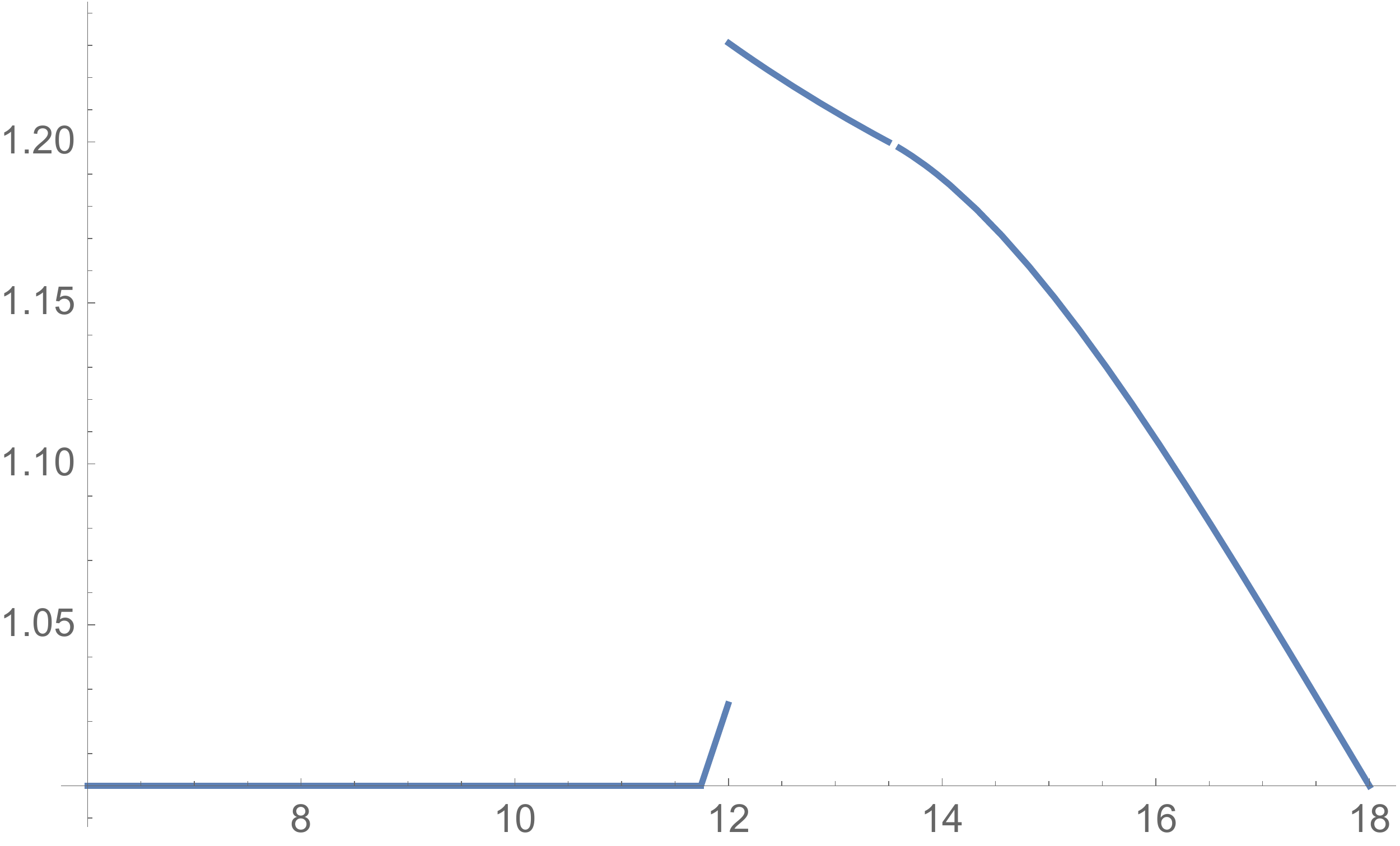}
    \caption{Price of anarchy for $M\in[2a^{k},2a^{k+1}]$, with $a=3$, $k=1$.}
\label{fi:PoAa=3}    
\end{figure}

The next theorem shows that the price of anarchy may fail to be asymptotic to one, even when the cost functions are all convex.

\begin{theorem}\label{th:illconvex}
There exist congestion games $\Gamma_{M}$ where the cost functions are all increasing and convex and both 
\begin{equation*}
\limsup_{M\to\infty}\PoA(\Gamma_{M})>1\quad\text{and}\quad\limsup_{M\to 0}\PoA(\Gamma_{M})>1.
\end{equation*}
\end{theorem}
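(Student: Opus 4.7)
The plan is to exhibit an explicit convex two-edge parallel example whose $\PoA$ equals a constant $\rho>1$ both along a sequence $M_k\to\infty$ and along a sequence $M_k\to 0$. The construction is a convex analogue of the step-function example of Theorem~\ref{th:PoAlimit}: convexity only requires slopes to be non-decreasing, which is still compatible with having the slope-jumps of $c_1$ and $c_2$ happen at interleaved points on a logarithmic scale, creating an ``out-of-phase'' growth behavior that sustains an inefficient equilibrium.

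Fix $a>1$. Define $c_1,c_2\colon(0,\infty)\to(0,\infty)$ as continuous piecewise-linear functions whose slopes are constant on dyadic-like blocks: $c_1$ has slope $a^{2j+1}$ on $[a^{2j},a^{2j+2}]$, and $c_2$ has slope $a^{2j}$ on $[a^{2j-1},a^{2j+1}]$, for every $j\in\mathbb{Z}$. Both slope sequences are non-decreasing, so $c_1$ and $c_2$ are convex and strictly increasing; their slope-kinks sit on disjoint sets (even versus odd powers of $a$), which implements the desired phase shift. The construction is exactly self-similar, $c_i(a^2 x)=a^4 c_i(x)$ for all $x>0$, so $\PoA(\Gamma_M)$ is periodic in $\log M$ with period $2\log a$, and it suffices to evaluate $\PoA$ at one representative mass such as $M=2a^{2k}$ for a single $k\in\mathbb{Z}$.

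At $M=2a^{2k}$ the equilibrium system $c_1(x_1^{*})=c_2(x_2^{*})$, $x_1^{*}+x_2^{*}=M$ can be solved by piecewise-linear algebra, and its unique solution is asymmetric: it places more flow on edge $1$ because $c_1$ has just crossed into its steep segment at $x=a^{2k}$ while $c_2$ is on a slope-plateau. The optimum, by contrast, is the symmetric split $x_1=x_2=a^{2k}$: since $a^{2k}$ is a kink of $c_1$, the marginal-social-cost equality $c_1(x_1)+x_1 c_1'(x_1)=c_2(x_2)+x_2 c_2'(x_2)$ can be realised by choosing the subgradient of $c_1$ at its kink in the correct range $[c_1'(a^{2k}-),c_1'(a^{2k}+)]$, and the required value of this subgradient is computed to lie inside that range. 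Direct evaluation yields a closed-form ratio $\rho_a=\WEq(\Gamma_M)/\Opt(\Gamma_M)>1$ depending only on $a$; for $a=2$ one obtains $\rho_a=16/15$. By periodicity the same $\rho_a$ is attained at every $M_k=2a^{2k}$, giving $\limsup_{M\to\infty}\PoA(\Gamma_M)\ge\rho_a$ and $\limsup_{M\to 0}\PoA(\Gamma_M)\ge\rho_a$.

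The main obstacle is that Corollaries~\ref{co:somefinite}, \ref{co:ciprime} and \ref{co:rvc-1} leave very little room for a convex counterexample: neither $c_i(x)/x$ nor $c_i'(x)$ may converge to a finite positive limit, and $c_1(x)/c_2(x)$ cannot have a finite positive limit either. The piecewise-linear interleaved design above is engineered precisely to thread this needle---both $c_i(x)/x$ and $c_i'(x)$ tend to infinity, and $c_1(x)/c_2(x)$ oscillates between two values as $x$ ranges over geometric intervals. Once the construction is set up, the only technical step is to verify that the equilibrium flow and the symmetric optimum indeed fall within the piecewise-linear segments assumed, which reduces to a short list of explicit inequalities in the constants; the remainder of the argument is arithmetic.
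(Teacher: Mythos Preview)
Your construction is correct and the arithmetic checks out: with $c_1(a^{2j})=\tfrac{a^{4j+1}}{a^2+1}$ and $c_2(a^{2j-1})=\tfrac{a^{4j-1}}{a^2+1}$ one indeed gets the scaling $c_i(a^2x)=a^4c_i(x)$, the symmetric split $x_1=x_2=a^{2k}$ satisfies the subgradient optimality condition for all $a>1$, and the equilibrium at $M=2a^{2k}$ gives
\[
\PoA(\Gamma_{2a^{2k}})=\frac{2a(a^2-a+2)}{(a+1)(a^2+1)}=1+\frac{(a-1)^3}{(a+1)(a^2+1)},
\]
which is $16/15$ at $a=2$. One small exposition glitch: the reason the equilibrium puts \emph{more} flow on edge~1 is that $c_1(a^{2k})<c_2(a^{2k})$ at the symmetric point, not that $c_1$ ``has just crossed into its steep segment''---the steepness actually resists, rather than causes, the asymmetry.

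This is a genuinely different example from the paper's. The paper takes $c_1(x)=x^2$ smooth and lets $c_2$ be the piecewise-linear secant interpolant of $c_1$ at the nodes $a^k$; it then picks a sequence $M_k=a^{k-1}(a+\sqrt{(2a^2+a)/3})$ at which the optimum sits exactly on a kink of $c_2$, computes equilibrium and optimum in closed form, and obtains $\PoA(\Gamma_{M_k})\approx 1.0059$ for $a=2$. Your two-piecewise-linear, phase-shifted design is more symmetric and yields a cleaner closed form with a noticeably larger gap; the paper's example has the minor advantage that one cost is $C^\infty$, so the inefficiency cannot be blamed on kinks in \emph{both} functions. Either route proves the theorem; yours is arguably the more elegant counterexample.
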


The next theorem shows that the $\limsup$ of the price of anarchy may even be infinite.

\begin{theorem}\label{th:PoAtoinfty}
There exist congestion games $\Gamma_{M}$ where $\limsup_{M\to\infty}\PoA(\Gamma_{M})=\infty$.
\end{theorem}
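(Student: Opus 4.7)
The plan is to push the construction underlying Theorem~\ref{th:PoAlimit} further. I keep the two-link parallel graph with $c_1(x)=x$, but I replace the geometric step pattern of $c_2$ by a much more violently scaled cascade: very long cheap plateaus of $c_2$ at heights $H_{k-1}$, each followed by a short steep ramp up to a higher plateau of value $H_k$. The scales are tuned so that at a carefully chosen mass $M_k$ the Wardrop equilibrium is forced just past the $k$-th ramp onto the expensive plateau $H_k$, whereas the social planner can still put almost all of the mass on the cheap plateau $H_{k-1}$. Concretely, fix $\delta>0$ small and strictly increasing sequences $K_k,H_k\to\infty$ (with $H_0:=0$) satisfying $K_{k+1}>K_k+H_k+\delta$. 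Define $c_2$ to be continuous and non-decreasing, equal to $0$ on $[0,K_1]$, equal to $H_k$ on $[K_k+\delta,K_{k+1}]$, and interpolating linearly between $H_{k-1}$ and $H_k$ on the transition $[K_k,K_k+\delta]$. Set $M_k:=K_k+H_k+\delta$.

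A case analysis at $M=M_k$ shows that the unique Wardrop equilibrium is $(x_1,x_2)=(H_k,K_k+\delta)$. On the one hand, any placement with $x_2\le K_k$ has $c_2(x_2)\le H_{k-1}$, and the equilibrium condition $c_1(x_1)=x_1=c_2(x_2)$ then forces $x_1\le H_{k-1}$, whence $x_1+x_2\le H_{k-1}+K_k<K_k+H_k+\delta=M_k$, a contradiction. On the other hand, any placement with $x_2>K_k+\delta$ has $c_2=H_k$, forces $x_1=H_k$, and hence $x_2=M_k-H_k=K_k+\delta$, the boundary of the $H_k$-plateau. Consequently $\WEq(\Gamma_{M_k})=M_k H_k$. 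For the optimum, the feasible plan $x_2=K_k$, $x_1=H_k+\delta$ keeps edge~$2$ on the cheap previous plateau and yields $\Opt(\Gamma_{M_k})\le(H_k+\delta)^2+K_kH_{k-1}$. Using the double-exponential scaling $K_k=2^{2^k}$, $H_k=\sqrt{K_k}$ (so $H_{k-1}=K_k^{1/4}$) one obtains
\begin{equation*}
\PoA(\Gamma_{M_k})\ge\frac{M_kH_k}{(H_k+\delta)^2+K_kH_{k-1}}\sim\frac{K_k^{3/2}}{K_k^{5/4}}=K_k^{1/4}\to\infty,
\end{equation*}
which proves the claim.

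The delicate step is the equilibrium case analysis; its crux is the gap condition $K_{k+1}>K_k+H_k+\delta$ combined with the strict monotonicity $H_{k-1}<H_k$, which together rule out all equilibria with $x_2$ sitting on an earlier plateau and pin the equilibrium to the expensive plateau at height $H_k$. Continuity of $c_2$ across the ramps is harmless once $\delta$ is fixed, since the asymptotic estimate is driven entirely by the scale $K_k$.
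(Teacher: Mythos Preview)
Your construction is correct and the argument goes through. One small gap: your equilibrium case analysis treats $x_2\le K_k$ and $x_2>K_k+\delta$ but skips the ramp interval $x_2\in(K_k,K_k+\delta)$. This is easy to patch: on the ramp, $c_2(x_2)\in(H_{k-1},H_k)$ forces $x_1<H_k$, whence $x_2=M_k-x_1>K_k+\delta$, a contradiction; alternatively one checks directly that the equation $M_k-x_2=c_2(x_2)$ has its unique root at the right endpoint $x_2=K_k+\delta$.

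Your route is genuinely different from the paper's. The paper keeps $c_1$ highly nonlinear, taking $c_1(x)=e^{x}/x$, and lets $c_2$ be a step function with values $c_1(\alpha_{k+1})$ on $(\alpha_k,\alpha_{k+1}]$ for a super-geometric sequence $\alpha_{k+1}/\alpha_k\to\infty$; the ensuing computation of the optimum is rather laborious, requiring a case-by-case comparison of the candidate minima $c_j$ and several asymptotic inequalities. By contrast you keep $c_1(x)=x$ linear and push all the irregularity into a sublinearly growing continuous $c_2$ (indeed $c_2(x)/x\to 0$), which explains why none of the regular-variation results of Section~\ref{suse:parallel} apply. The payoff is a much shorter and more transparent calculation: the equilibrium is pinned by a two-line argument and a single feasible flow already gives a good enough upper bound on $\Opt$. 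The paper's example, on the other hand, shows that unbounded $\PoA$ can occur even when both costs diverge super-polynomially, so the two constructions illustrate opposite failure modes of the regular-variation hypotheses.
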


%
%
%

\subsubsection*{Acknowledgments} 
Riccardo Colini-Baldeschi is a member of GNCS-INdAM.
Roberto Cominetti gratefully acknowledges the support and hospitality of LUISS during a visit in which this research was initiated. His research is also supported by N\'ucleo Milenio Informaci\'on y Coordinaci\'on en Redes ICM/FIC P10-024F.
Marco Scarsini is a member of GNAMPA-INdAM. His work is partially supported by PRIN  and MOE2013-T2-1-158.

\bibliographystyle{splncs03}
\bibliography{biblimitpoa}

\newpage

\appendix

\section{Regularly varying functions}\label{se:rvf}

The reader is referred to \cite{BinGolTeu:CUP1989} for an extended treatment of regularly varying functions. We study here some properties that are useful for our results.

\begin{lemma}\label{le:definitions_reg_var}

Let $\beta > 0$ and let $\Theta$ be a continuous and strictly increasing function, then the following definitions are equivalent:

\begin{enumerate}[{\rm (a)}]
\item\label{it:le:definitions_reg_var-a} 
the function $\Theta$ is $\beta$-regularly varying,
	
\item\label{it:le:definitions_reg_var-b} 
the function $\Theta^{-1}$ is $\frac{1}{\beta}$-regularly varying,
	
\item\label{it:le:definitions_reg_var-c}
for all $\gamma>0$ 
\begin{equation*}
\lim_{t \to \infty} \frac{1}{t}\Theta^{-1}(\gamma \Theta(t)) = \gamma^{1/\beta}.
\end{equation*}

\end{enumerate}

\end{lemma}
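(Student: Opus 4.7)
The plan is to establish the cycle (a) $\Rightarrow$ (c) $\Rightarrow$ (b) $\Rightarrow$ (a). A preliminary remark: since $\Theta$ is strictly increasing, continuous, and positive on $(0,\infty)$, it has a limit $\Theta(\infty)\in(0,\infty]$; if this limit were finite, then $\Theta(at)/\Theta(t)\to 1$ for every $a>0$, contradicting (a) (which requires the limit to equal $a^\beta$ with $\beta>0$) as well as (b) applied to $\Theta^{-1}$. Hence under any of the three definitions $\Theta(t)\to\infty$ and $\Theta^{-1}$ is eventually defined on the whole half-line, so that all the displayed limits make sense.

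For (a) $\Rightarrow$ (c), fix $\gamma>0$ and define
\begin{equation*}
\alpha(t):=\frac{1}{t}\,\Theta^{-1}(\gamma\Theta(t)),
\end{equation*}
so that by construction $\Theta(\alpha(t)t)/\Theta(t)=\gamma$. I will show $\alpha(t)\to\gamma^{1/\beta}$ by bounding $\limsup$ and $\liminf$ separately. If $L:=\limsup_{t\to\infty}\alpha(t)>\gamma^{1/\beta}$, pick $L'\in(\gamma^{1/\beta},L)$ and a sequence $t_n\to\infty$ with $\alpha(t_n)\ge L'$; by monotonicity of $\Theta$,
\begin{equation*}
\gamma=\frac{\Theta(\alpha(t_n)t_n)}{\Theta(t_n)}\ge\frac{\Theta(L't_n)}{\Theta(t_n)}\xrightarrow[n\to\infty]{}(L')^\beta>\gamma,
\end{equation*}
a contradiction, which also rules out $L=\infty$. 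The symmetric argument using $L'\in(L,\gamma^{1/\beta})$ rules out $\liminf_{t\to\infty}\alpha(t)<\gamma^{1/\beta}$, including the case $L=0$. Therefore $\alpha(t)\to\gamma^{1/\beta}$, which is exactly (c).

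For (c) $\Rightarrow$ (b), substitute $u=\Theta(t)$, i.e.\ $t=\Theta^{-1}(u)$; as $t\to\infty$ we have $u\to\infty$, and (c) rewrites as
\begin{equation*}
\frac{\Theta^{-1}(\gamma u)}{\Theta^{-1}(u)}=\frac{1}{t}\,\Theta^{-1}(\gamma\Theta(t))\longrightarrow\gamma^{1/\beta},
\end{equation*}
which is (b). Finally, for (b) $\Rightarrow$ (a), apply the already-established implication (a) $\Rightarrow$ (c) to $\Theta^{-1}$ with exponent $1/\beta$: this yields $\Theta(\gamma\Theta^{-1}(u))/u\to\gamma^{\beta}$, and the substitution $s=\Theta^{-1}(u)$ (so $u=\Theta(s)$) gives $\Theta(\gamma s)/\Theta(s)\to\gamma^\beta$, i.e.\ (a).

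The main obstacle is the sandwich in (a) $\Rightarrow$ (c): $\alpha(t)$ is given abstractly and a priori need not be bounded or monotone, so the argument has to treat the degenerate limit points $L=0$ and $L=\infty$ explicitly by choosing the auxiliary level $L'$ strictly between the hypothetical limit point and $\gamma^{1/\beta}$ and exploiting the monotonicity of $\Theta$ together with the ratio limit $\Theta(L't)/\Theta(t)\to(L')^\beta$ supplied by (a). Once this implication is in hand, the other two are routine changes of variable.
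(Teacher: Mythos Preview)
Your proof is correct and self-contained. The paper, by contrast, takes a shortcut: it cites de~Haan for the equivalence (a)~$\Leftrightarrow$~(b), and then observes that (b)~$\Leftrightarrow$~(c) is an immediate change of variable $u=\Theta(t)$, exactly as in your (c)~$\Rightarrow$~(b) step. So the substantive difference is that you actually prove the implication (a)~$\Rightarrow$~(c) from scratch via the $\limsup$/$\liminf$ sandwich, whereas the paper outsources the corresponding work to the literature. Your route is more elementary and makes the lemma independent of external references; the paper's route is shorter on the page but relies on the reader trusting the citation. One cosmetic point: you reuse the letter $L$ for both the $\limsup$ and the $\liminf$ in the symmetric half of the argument, which is momentarily confusing; calling the second one $\ell$ would be cleaner.
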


\begin{proof}
The equivalence of \ref{it:le:definitions_reg_var-a} and \ref{it:le:definitions_reg_var-b} is proved in 
\cite{deH:MCA1970} at page 22.

The equivalence of \ref{it:le:definitions_reg_var-b} and \ref{it:le:definitions_reg_var-c} is immediate, since, by setting $u = \Theta(t)$, we have
\begin{equation*}
\frac{1}{t} \Theta^{-1}(\gamma \cdot \Theta(t)) =
\frac{\Theta^{-1}(\gamma \cdot u)}{\Theta^{-1}(u)} \to \gamma^{1/\beta}. \qquad\qed
\end{equation*}
\end{proof}

\begin{lemma}\label{le:intANDprod_reg_var}
If $\Theta$ is a continuous and strictly increasing $\beta$-regularly varying function, then $x \cdot \Theta(x)$ and $\int_0^{x} \Theta(s) \diff s$ are $(1+\beta)$-regularly varying functions.
\end{lemma}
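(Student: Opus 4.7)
The first statement is by direct computation: write
\begin{equation*}
\frac{(ax)\Theta(ax)}{x\,\Theta(x)} = a\cdot\frac{\Theta(ax)}{\Theta(x)},
\end{equation*}
and pass to the limit as $x\to\infty$ using the $\beta$-regular variation of $\Theta$ to obtain $a\cdot a^{\beta}=a^{1+\beta}$.

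For the integral statement, the plan is to reduce the ratio $\int_0^{ax}\Theta/\int_0^x\Theta$ to an expression in which only $\Theta(xu)/\Theta(x)$ appears, and then invoke the Uniform Convergence Theorem for regularly varying functions (see Bingham, Goldie, Teugels, Theorem~1.5.2), which guarantees that the convergence $\Theta(xu)/\Theta(x)\to u^{\beta}$ holds uniformly for $u$ in any compact subset of $(0,\infty)$. Concretely, I would perform the substitution $s=xu$ to get
\begin{equation*}
\frac{\int_0^{ax}\Theta(s)\diff s}{\int_0^{x}\Theta(s)\diff s}
=\frac{\int_0^{a}\Theta(xu)\diff u}{\int_0^{1}\Theta(xu)\diff u}
=\frac{\int_0^{a}\Theta(xu)/\Theta(x)\diff u}{\int_0^{1}\Theta(xu)/\Theta(x)\diff u},
\end{equation*}
and then interchange limit and integral to obtain the ratio $\int_0^{a}u^{\beta}\diff u\big/\int_0^{1}u^{\beta}\diff u=a^{1+\beta}$.

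The main obstacle is controlling the integrand near $u=0$, where the uniform convergence statement does not apply directly. I would handle this by splitting each integral as $\int_0^{\delta}+\int_{\delta}^{a}$ (respectively $\int_\delta^1$): on $[\delta,a]$ the Uniform Convergence Theorem gives convergence to $\int_\delta^a u^{\beta}\diff u$, while on $[0,\delta]$ I would use the fact that $\Theta$ is increasing to bound $\Theta(xu)\le\Theta(x\delta)$, hence $\int_0^\delta \Theta(xu)/\Theta(x)\diff u\le\delta\cdot\Theta(x\delta)/\Theta(x)\to\delta\cdot\delta^{\beta}=\delta^{1+\beta}$, which is negligible as $\delta\to 0$ since the limit ratio $\int_\delta^a u^\beta\diff u/\int_\delta^1 u^\beta\diff u$ tends to $a^{1+\beta}$. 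A standard $\varepsilon$-$\delta$ argument then shows that the ratio of the full integrals converges to $a^{1+\beta}$, which is the desired $(1+\beta)$-regular variation of $\int_0^x\Theta(s)\diff s$. Alternatively, one could simply quote Karamata's theorem from the cited reference, which states exactly this conclusion (and in fact gives the stronger asymptotic $x\Theta(x)/\int_0^x\Theta(s)\diff s\to 1+\beta$).
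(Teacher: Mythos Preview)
Your treatment of $x\Theta(x)$ is identical to the paper's. For the integral, your argument via the Uniform Convergence Theorem with a splitting near $u=0$ is correct (the monotonicity bound $\int_0^\delta \Theta(xu)/\Theta(x)\,\mathrm{d}u\le\delta\cdot\Theta(x\delta)/\Theta(x)\to\delta^{1+\beta}$ does the job), but it is considerably more work than what the paper does. The paper simply applies L'H\^opital's rule: since $\Theta$ is continuous, positive and increasing, $\int_0^x\Theta(s)\,\mathrm{d}s\to\infty$ and its derivative is $\Theta(x)\neq 0$, so
\[
\lim_{x\to\infty}\frac{\int_0^{ax}\Theta(s)\,\mathrm{d}s}{\int_0^{x}\Theta(s)\,\mathrm{d}s}
=\lim_{x\to\infty}\frac{a\,\Theta(ax)}{\Theta(x)}=a\cdot a^{\beta}=a^{1+\beta}.
\]
Your route is essentially a miniature proof of Karamata's theorem and would survive without the continuity assumption on $\Theta$ (replacing it by measurability), whereas the paper's one-line argument exploits continuity to get the derivative of the antiderivative directly. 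Given the hypotheses in the lemma, the L'H\^opital approach is the more economical choice.
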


\begin{proof}
Let's first discuss the function $x \cdot \Theta(x)$. Observe that
\begin{equation*}
\lim_{x \to \infty} \frac{ax \Theta(ax)}{x \Theta(x)} = a \cdot a^{\beta} = a^{1+\beta}.
\end{equation*}
Similarly,
\begin{equation*}
\lim_{x \to \infty} \frac{\int_0^{ax} \Theta(s) \diff s}{\int_0^{x} \Theta(s) \diff s} = \lim_{x \to \infty} \frac{ \Theta(ax) a}{ \Theta(x) } = a^{\beta} a = a^{1+\beta}. \qquad\qed
\end{equation*}
\end{proof}

The following two lemmata appear in Proposition $1.5.7$ in \cite{BinGolTeu:CUP1989}.

\begin{lemma}\label{le:concate_reg_var}
For $i=1,2$, let $\Theta_{i}$ be a continuous and strictly increasing $\beta_{i}$-regularly varying function. Then $\Theta_1 \circ \Theta_2$ is $\beta_1 \cdot \beta_2$-regularly varying.
\end{lemma}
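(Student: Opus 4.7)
The plan is to verify the defining limit for $\Theta_1\circ\Theta_2$ by a two-step sandwich: use the $\beta_2$-regular variation of $\Theta_2$ to locate $\Theta_2(ax)$ in a small multiplicative window around $a^{\beta_2}\Theta_2(x)$, apply monotonicity of $\Theta_1$ to transport this window through, and then use the $\beta_1$-regular variation of $\Theta_1$ (evaluated at the divergent argument $\Theta_2(x)$) to obtain the desired ratio.

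Concretely, fix $a>0$ and $\varepsilon>0$. By $\beta_2$-regular variation of $\Theta_2$, there exists $x_0$ such that for all $x\ge x_0$
\begin{equation*}
(a^{\beta_2}-\varepsilon)\,\Theta_2(x) \;<\; \Theta_2(ax) \;<\; (a^{\beta_2}+\varepsilon)\,\Theta_2(x).
\end{equation*}
Since $\Theta_1$ is strictly increasing, this yields
\begin{equation*}
\frac{\Theta_1\bigl((a^{\beta_2}-\varepsilon)\Theta_2(x)\bigr)}{\Theta_1(\Theta_2(x))}
\;<\;
\frac{\Theta_1(\Theta_2(ax))}{\Theta_1(\Theta_2(x))}
\;<\;
\frac{\Theta_1\bigl((a^{\beta_2}+\varepsilon)\Theta_2(x)\bigr)}{\Theta_1(\Theta_2(x))}.
\end{equation*}

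Next I would observe that a strictly increasing $\beta_2$-regularly varying function with $\beta_2>0$ is unbounded, so $y:=\Theta_2(x)\to\infty$ as $x\to\infty$. Setting $y=\Theta_2(x)$ in the outer ratios and applying the $\beta_1$-regular variation of $\Theta_1$ to the arguments $(a^{\beta_2}\pm\varepsilon)y$, both sides of the sandwich converge, as $x\to\infty$, to $(a^{\beta_2}\pm\varepsilon)^{\beta_1}$. Letting $\varepsilon\downarrow 0$ squeezes the middle ratio to $(a^{\beta_2})^{\beta_1}=a^{\beta_1\beta_2}$, which is exactly the required limit.

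The only subtle point is the passage $\Theta_2(x)\to\infty$, which is needed to activate the regular variation of $\Theta_1$; this is a short consequence of strict monotonicity combined with $\beta_2>0$, since a bounded strictly increasing function would force the ratio $\Theta_2(ax)/\Theta_2(x)\to 1$ and hence $\beta_2=0$. Everything else is a routine sandwich argument, so I expect no serious obstacle.
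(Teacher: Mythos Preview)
Your argument is correct and is the standard elementary sandwich proof for composition of regularly varying functions. The paper itself does not prove this lemma at all; it simply records it as an instance of Proposition~1.5.7 in Bingham, Goldie, and Teugels, \emph{Regular Variation}. So there is no in-paper argument to compare against, beyond noting that your self-contained proof is precisely the kind of direct computation that underlies that reference.

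One small remark on completeness: your justification that $\Theta_2(x)\to\infty$ invokes $\beta_2>0$, whereas the lemma as stated (and the paper's definition, which allows $\beta\ge 0$) does not exclude $\beta_2=0$. If $\beta_2=0$ and $\Theta_2$ happens to be unbounded, your main sandwich still applies verbatim; if $\beta_2=0$ and $\Theta_2$ is bounded with limit $L$, continuity of $\Theta_1$ gives $\Theta_1(\Theta_2(ax))/\Theta_1(\Theta_2(x))\to\Theta_1(L)/\Theta_1(L)=1=a^{\beta_1\cdot 0}$ directly. Either way the conclusion holds, but a one-line remark would close that edge case.
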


\begin{lemma}\label{lem:sum_reg_var}
Let $\Theta_1$ and $\Theta_2$ be two continuous and strictly increasing $\beta$-regularly varying functions, then
$\Theta_1 + \Theta_2$ is  $\beta$-regularly varying.
\end{lemma}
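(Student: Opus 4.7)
The plan is to verify the defining limit of regular variation directly, by rewriting the ratio for $\Theta_1+\Theta_2$ as a convex combination of the ratios for $\Theta_1$ and $\Theta_2$ individually. For any $a>0$ I want to show
\begin{equation*}
\lim_{x\to\infty}\frac{\Theta_1(ax)+\Theta_2(ax)}{\Theta_1(x)+\Theta_2(x)}=a^{\beta}.
\end{equation*}
Since each $\Theta_i$ maps $(0,+\infty)$ into $(0,+\infty)$, the quantity $\Theta_1(x)+\Theta_2(x)$ is strictly positive for all $x>0$, so I may define the weights
\begin{equation*}
w_i(x)=\frac{\Theta_i(x)}{\Theta_1(x)+\Theta_2(x)}\in(0,1),\qquad w_1(x)+w_2(x)=1.
\end{equation*}

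The key algebraic identity is then
\begin{equation*}
\frac{\Theta_1(ax)+\Theta_2(ax)}{\Theta_1(x)+\Theta_2(x)}=w_1(x)\,\frac{\Theta_1(ax)}{\Theta_1(x)}+w_2(x)\,\frac{\Theta_2(ax)}{\Theta_2(x)}.
\end{equation*}
By hypothesis $\Theta_i(ax)/\Theta_i(x)\to a^{\beta}$ for $i=1,2$. Given $\varepsilon>0$, choose $x_0$ so large that $|\Theta_i(ax)/\Theta_i(x)-a^{\beta}|<\varepsilon$ for both $i$ and all $x\geq x_0$. Since $w_1(x)+w_2(x)=1$ and both weights are nonnegative, a triangle inequality gives
\begin{equation*}
\left|\,w_1(x)\,\frac{\Theta_1(ax)}{\Theta_1(x)}+w_2(x)\,\frac{\Theta_2(ax)}{\Theta_2(x)}-a^{\beta}\right|\leq w_1(x)\varepsilon+w_2(x)\varepsilon=\varepsilon
\end{equation*}
for all $x\geq x_0$. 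This proves the limit, and hence that $\Theta_1+\Theta_2$ is $\beta$-regularly varying.

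Note that the individual weights $w_i(x)$ need not converge as $x\to\infty$; the argument only uses that they lie in $[0,1]$ and sum to $1$, which is what makes the convex-combination estimate go through uniformly. There is no real obstacle here: the only point to be careful about is that $\Theta_1(x)+\Theta_2(x)$ is nonvanishing so the weights are well defined, and this is guaranteed by the codomain $(0,+\infty)$ built into the definition of a $\beta$-regularly varying function. No quantitative control on the rate of convergence of either $\Theta_i(ax)/\Theta_i(x)$ is needed.
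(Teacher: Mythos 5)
Your argument is correct. The paper itself gives no proof of this lemma: it simply notes that the statement appears as Proposition 1.5.7 in the cited monograph of Bingham, Goldie and Teugels, so there is nothing in the text to compare against line by line. What you supply is a self-contained elementary verification, and it is the standard one: writing
\begin{equation*}
\frac{\Theta_1(ax)+\Theta_2(ax)}{\Theta_1(x)+\Theta_2(x)}=w_1(x)\,\frac{\Theta_1(ax)}{\Theta_1(x)}+w_2(x)\,\frac{\Theta_2(ax)}{\Theta_2(x)},\qquad w_i(x)=\frac{\Theta_i(x)}{\Theta_1(x)+\Theta_2(x)},
\end{equation*}
and then exploiting only that $w_1(x)+w_2(x)=1$ with $w_i(x)\ge 0$ is exactly the right move, since it makes the error estimate uniform in the (possibly non-convergent) weights. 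Your remark that the weights need not converge is the one subtlety worth flagging, and you handle it correctly; note also that continuity and strict monotonicity of the $\Theta_i$ play no role in the argument, so your proof is in fact slightly more general than the statement requires. The trade-off is the usual one: the paper's citation keeps the appendix short and defers to a standard reference, while your proof costs a few lines but makes the appendix self-contained.
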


\section{Omitted proofs}\label{se:proofs}

\subsection*{Proofs of Section~\ref{se:well_behaved}}

\begin{proof}[of Theorem~\ref{th:addconstant}]
If some $c_i(\cdot)$ remains bounded the conclusion follows 
from Theorem~\ref{th:ConstantCost}, so we focus on the case where 
$c_i(x)\to\infty$ as $x\to\infty$ for all $i$.
In this case all the equilibrium flows $\eqflow{x}_i$ must diverge to $\infty$ as $M\to\infty$. In particular they will be all positive and
 the equilibrium is characterized by $c_i(\eqflow{x}_i)=\lambda$ for some $\lambda\to\infty$ as $M\to \infty$. 
In fact, since $\sum_{i=1}^n\eqflow{x}_i=M$ we can get $\lambda$ by solving the equation
$g(\lambda)=M$ where $g(\lambda)=\sum_{e_{i}\in E} c_{i}^{-1}(\lambda)$.

The same applies to  $\Gamma_{M}^{\boldsymbol{a}}$.
Call $\lambda^{\boldsymbol{a}}$ the cost at the equilibrium on each edge in $\Gamma_{M}^{\boldsymbol{a}}$
and $\boldsymbol{x}^{\boldsymbol{a}}$ the equilibrium of $\Gamma_{M}^{\boldsymbol{a}}$. Then we have
$a_{i}+c_{i}(x_{i}^{\boldsymbol{a}})=\lambda^{\boldsymbol{a}}$ so that
\begin{equation*}
M= \sum_{e_{i}\in E} c_{i}^{-1}(\lambda^{\boldsymbol{a}}-a_{i}).
\end{equation*}
Denoting $\underline{a}:=\min_{e_{i}\in E}a_{i}$ and $\bar{a}:\max_{e_{i}\in E}a_{i}$, the monotonicity of $c_i(\cdot)$
gives
\begin{equation*}
g(\lambda^{\boldsymbol{a}}-\bar{a})\leq M\leq g(\lambda^{\boldsymbol{a}}-\underline{a})
\end{equation*}
and since $M=g(\lambda)$ we get the inequality $\lambda^{\boldsymbol{a}}-\bar{a}\leq \lambda\leq \lambda^{\boldsymbol{a}}-\underline{a}$
which implies 
\begin{equation}\label{eq:lambda/lambda}
\lim_{M \to \infty} \frac{\lambda^{\boldsymbol{a}}}{\lambda} = 1.
\end{equation}

Now, for the optimum we have
\begin{equation*}
\Opt(\Gamma_{M}^{\boldsymbol{a}})=\min_{x\in\mathcal{F}_M}\sum_{e_{i}\in E} x_{i}(a_{i}+c_{i}(x_{i})) \ge \underline{a}M+\Opt(\Gamma_{M})
\end{equation*}
and we derive the estimate
\begin{equation*}
\PoA(\Gamma_{M}^{\boldsymbol{a}})
=\frac{M\lambda^{\boldsymbol{a}}}{\Opt(\Gamma_{M}^{\boldsymbol{a}})} 
\leq \frac{M\lambda^{\boldsymbol{a}}}{\underline{a}M+\Opt(\Gamma_M)}
= \frac{\lambda^{\boldsymbol{a}}/\lambda}{\underline{a}/\lambda+\frac{\Opt(\Gamma_{M})}{M\lambda}}\to 1,
\end{equation*}
which follows from the assumption $\Opt(\Gamma_{M})/(M\lambda)=\PoA(\Gamma_M)^{-1}\to 1$,
combined with \eqref{eq:lambda/lambda} and the fact that $\lambda\to\infty$.\qed
\end{proof}

\begin{proof}[of Corollary~\ref{co:somefinite}]
Apply Theorem~\ref{th:regvar1} with $c$ equal to the identity.
\end{proof}

\begin{proof}[of Corollary~\ref{co:ciprime}]
Just notice that $\lim_{x\to\infty}c_i^{\prime}(x) = m_i$ implies $\lim_{x \to \infty} c_i(x)/x = m_i$.
\end{proof}

\begin{proof}[of Corollary~\ref{co:rvc-1}]
If $m_{i}<\infty$, then from  \eqref{eq:regvar_cor2} we can derive the following inequalities:
\begin{equation*}
(m_{i}-\varepsilon)c(x) \le c_{i}(x) \le (m_{i}+\varepsilon)c(x),
\end{equation*}
hence
\begin{equation*}
\underset{\begin{matrix}
\downarrow\\
(m_{i}-\varepsilon)^{1/\beta}
\end{matrix}}
{\frac{c^{-1}((m_{i}-\varepsilon)c(x))}{x}} \le \frac{c^{-1}(c_{i}(x))}{x} \le \underset{\begin{matrix}
\downarrow\\
(m_{i}+\varepsilon)^{1/\beta}
\end{matrix}}{\frac{c^{-1}((m_{i}+\varepsilon)c(x))}{x}},
\end{equation*}
by Lemma~\ref{le:definitions_reg_var}\ref{it:le:definitions_reg_var-b}.
Since 
\begin{equation*}
(m_{i}-\varepsilon)^{1/\beta} \xrightarrow[\varepsilon\to 0]{} m_{i}^{1/\beta}\quad\text{and}\quad (m_{i}+\varepsilon)^{1/\beta} \xrightarrow[\varepsilon\to 0]{} m_{i}^{1/\beta},
\end{equation*}
if we call $\alpha_{i}=m_{i}^{1/\beta}$, we have
\begin{equation*}
\frac{c^{-1}(c_{i}(x))}{x} \to \alpha_{i}
\end{equation*}
and we can apply Theorem~\ref{th:regvar1}.

If $m_{i}=\infty$, then take 
\begin{equation*}
m'_{i} \le \frac{c_{i}(x)}{c(x)} \Longrightarrow m'_{i} c_{i}(x) \le c_{i}(x) \Longrightarrow \frac{c^{-1}(m'_{i} c_{i}(x))}{x} \le \frac{c^{-1}(c_{i}(x))}{x}.
\end{equation*}
Now
\begin{equation*}
\frac{c^{-1}(m'_{i} c_{i}(x))}{x} \xrightarrow[x\to\infty]{} (m'_{i})^{1/\beta}\xrightarrow[m'_{i}\to\infty]{}\infty = \alpha_{i} = \lim\frac{c^{-1}(c_{i}(x))}{x}.
\end{equation*}
and the previous result can be applied. \qed
\end{proof}

\begin{proof}[of Corollary~\ref{co:affinecosts}]
The conditions in Corollary~\ref{co:somefinite} are satified. \qed
\end{proof}

\begin{proof}[of Theorem~\ref{th:affinebounds}]
The results follows easily from Corollary~\ref{co:somefinite}. \qed
\end{proof}

\subsection*{Proofs of Section~\ref{se:illbehaved}}

In the whole subsection, for the sake of simplicity, we call $x$ the flow on $e_{1}$ and $y$ the flow on $e_{2}$.

\begin{proof}[of Theorem~\ref{th:PoAlimit}]
Let us study the price of anarchy for $M\in(2a^k, 2a^{k+1}]$.

\medskip
\noindent
\textbf{Equilibrium cost.}
In the subinterval $M\in(2a^{k}, a^{k}+a^{k+1}]$ we have 
\begin{align*}
\eqflow{x}&=M-a^{k}, & c_1(\eqflow{x})&= M-a^{k}\le a^{k+1},\\
\eqflow{y}&=a^{k}, & c_2(\eqflow{y})&= a^{k}.
\end{align*} 
For $M\in(a^{k}+a^{k+1},2a^{k+1}]$ we have 
\begin{align*}
\eqflow{x}&=a^{k+1}, & c_1(\eqflow{x})&= a^{k+1},\\
\eqflow{y}&=M-a^{k+1}, & c_2(\eqflow{y})&= a^{k+1}.
\end{align*} 
Therefore
\begin{equation*}
\WEq(\Gamma_M)=
\begin{cases}
(M-a^k)^2+a^{2k}&\text{for }M\in(2a^{k}, a^{k}+a^{k+1}],\\
Ma^{k+1}&\text{for }M\in(a^{k}+a^{k+1},2a^{k+1}].
\end{cases}
\end{equation*}

\medskip
\noindent
\textbf{Optimal cost.} In order to compute the optimal cost 
\begin{equation*}
\Opt(\Gamma_M)=\min_{0\leq y\leq M} yc_2(y)+(M-y)^2
\end{equation*}
we decompose the problem over the intervals $I_j=(a^j,a^{j+1}]$ on 
which $c_2(\cdot)$ is constant, namely, we consider the 
subproblems
\begin{equation*}
C_j=\min_{y\in I_j,y\leq M}a^{j+1}y+(M-y)^2.
\end{equation*}
We observe that for $j\geq k+2$ we have $a^j\geq a^{k+2}\geq 2a^{k+1}\geq M$ so that  
$C_j$ is infeasible and therefore 
$\Opt(\Gamma_M)=\min\{C_0,C_1,\ldots,C_{k+1}\}$.
In fact, we will show that $\Opt(\Gamma_M)=\min\{C_{k-1},C_k\}$.

Let us compute $C_j$. Since $(M-y)^2$
is symmetric around $M$, the constraint $y\leq M$ can be dropped
and then the minimum $C_j$ is obtained by projecting 
onto $[a^{j},a^{j+1}]$ the unconstrained minimizer $y_{j}=M-a^{j+1}/2$.
We get
\begin{equation}\label{CJ}
C_j=
\begin{cases}
a^{j+1}a^j+(M-a^j)^2&\text{ if }M<a^j+\frac{a^{j+1}}{2},\\
a^{j+1}(M-\frac{a^{j+1}}{2})+(\frac{a^{j+1}}{2})^2&\text{ if }a^j+\frac{a^{j+1}}{2}\leq M\leq a^{j+1}+\frac{a^{j+1}}{2},\\
a^{j+1}a^{j+1}+(M-a^{j+1})^2&\text{ if }M>a^{j+1}+\frac{a^{j+1}}{2}.
\end{cases}
\end{equation}

\begin{newclaim}\label{cl:cj-1>cj}
For $j\le k-1$ we have $C_{j} = a^{j+1}a^{j+1}+(M-a^{j+1})^{2}$ and 
$C_{j-1} \ge C_{j}$. 
\end{newclaim}

\begin{proof}
The expression for $C_j$ follows from \eqref{CJ} if we note that $M > 2a^{k} \ge \frac{3}{2}a^{j+1}$.
In order to prove that $C_{j-1} \ge C_{j}$ we observe that
\begin{align*}
C_{j-1} \ge C_{j} &\iff (a^{j})^{2}+(M-a^{j})^{2} \ge (a^{j})^{2}a^{2}+(M-a^{j}a)^{2}\\
&\iff 2(a^{j})^{2} + M^{2} - 2Ma^{j} \ge 2(a^{j})^{2} a^{2} + M^{2} -  2Ma^{j} a \\
&\iff Ma^{j}(a-1) \ge (a^{j})^{2}(a^{2}-1) \\
&\iff M \ge a^{j}(a+1) = a^{j}+a^{j+1}.
\end{align*}
Since $M>2a^{k}=a^{k}+a^{k}\ge a^{j}+a^{j+1}$, this holds true. \qed
\end{proof}

\begin{newclaim}\label{cl:cj<cj+1}
$C_{k+1}=a^{k+2}a^{k+1}+(M-a^{k+1})^{2}\geq C_{k-1}.$
\end{newclaim}

\begin{proof}

Since $M \le 2 a^{k+1} \le a^{k+1}+\frac{a^{k+2}}{2}$ we get 
the expression for $C_{k+1}$ from \eqref{CJ}.
Then
\begin{align*}
C_{k-1} \le C_{k+1} &\iff (a^{k})^{2} + (M-a^{k})^{2} \le a^{k+2}a^{k+1} + (M-a^{k+1})^{2} \\
&\iff 2(a^{k})^{2} + M^{2} - 2Ma^{k} \le (a^{k})^{2}a^{3} + (a^{k})^{2}a^{2} + M^{2} - 2Ma^{k+1} \\
&\iff 2Ma^{k}(a-1) \le (a^{k})^{2}(a-1)(a^{2}+2a
+2) \\
&\iff 2M \le a^{k}(a^{2}+2a+2).
\end{align*}
Since $M \le 2a^{k+1}$ it suffices to have $4a^{k+1}\le a^{k}(a^{2}+2a+2)$ which is easily seen to hold.
\qed
\end{proof}

Combining the previous claims we get that $\Opt(\Gamma_M)=\min\{C_{k-1},C_k\}$.
It remains to figure out which one between $C_{k-1}$ and $C_k$ attains the minimum.
This depends on where $M$ is located within the interval $(2a^k,2a^{k+1}]$ as
explained in our next claim. In the sequel we denote
\begin{align*}
\alpha&=1+\frac{a}{2}\\ 
\beta&=1+\frac{a}{2}+\sqrt{a-1}\\
\gamma&=\frac{3}{2}a
\end{align*}
and we observe that 
\begin{equation*}
2\leq\alpha\leq\beta\leq\gamma\leq 2a.
\end{equation*}

\begin{newclaim}\label{cl:ck<ck-1}
For $M\in (2a^k,2a^{k+1}]$ we have 
\begin{equation*}
\Opt(\Gamma_M)=
\begin{cases}
C_{k-1}= (a^{k})^{2} + (M-a^{k})^{2}&\text{ if } M\in (2a^k,\alpha a^k)\\
C_{k-1}= (a^{k})^{2} + (M-a^{k})^{2}&\text{ if } M\in [\alpha a^k,\beta a^k)\\
C_{k}= a^{k+1}(M-\frac{1}{4}a^{k+1})&\text{ if } M\in [\beta a^k,\gamma a^k]\\
C_{k}= (a^{k+1})^{2} + (M-a^{k+1})^{2}&\text{ if } M\in (\gamma a^k,2 a^{k+1}].
\end{cases}
\end{equation*}
\end{newclaim}
\begin{proof}
From \eqref{CJ} we have $C_{k-1}= (a^{k})^{2} + (M-a^{k})^{2}$ whereas the expression
for $C_k$ changes depending where $M$ is located.

\medskip
\noindent
\textbf{(a) Initial interval $M\in(2a^k,\alpha a^k)$}. 

\noindent
Here $M<a^k+\frac{1}{2}a^{k+1}$ so that
\eqref{CJ} gives $C_k=a^{k+1}a^{k} + (M-a^{k})^{2}$. Hence, clearly $C_{k-1}\le C_{k}$
and $\Opt(\Gamma_M)=C_{k-1}$.

\medskip
\noindent
\textbf{(b) Final interval $M\in (\gamma a^{k},2a^{k+1}]$}. 

\noindent
Here $M>\gamma a^k =\frac{3}{2}a^{k+1}$ 
so that \eqref{CJ} gives $C_{k}=(a^{k+1})^{2}+(M-a^{k+1})^{2}$. Proceeding
as in the proof of Claim \ref{cl:cj-1>cj}, we have $C_{k-1}\ge C_{k}$ if and only if $M \ge a^{k}+a^{k+1}$.
The latter holds since  $M \ge \frac{3}{2}a^{k+1} \ge a^{k+1}+a^{k}$. Hence
$\Opt(\Gamma_M)=C_{k}$.

\medskip
\noindent
\textbf{(c) Intermediate interval $M\in [\alpha a^k,\gamma a^{k}]$}. 

\noindent
Here $a^{k}+\frac{1}{2}a^{k+1}\leq M \leq \frac{3}{2}a^{k+1}$ so that  \eqref{CJ}
gives
\begin{equation*}
C_{k}= a^{k+1}\left(M-\frac{1}{2}a^{k+1}\right)+\left(\frac{1}{2}a^{k+1}\right)^{2}=a^{k+1}\left(M-\frac{1}{4}a^{k+1}\right).
\end{equation*}
Then, denoting $z=M/a^k$ we have
\begin{align*}
C_{k-1}\le C_{k} &\iff 2(a^{k})^{2} + M^{2} - 2Ma^{k} \le a^{k+1}M -\left(\frac{1}{2}a^{k+1}\right)^{2}.\\
&\iff z^2-z(2+a)+\left(2+\frac{1}{4}a^2\right)\leq 0\\
&\iff 1+\frac{1}{2}a-\sqrt{a-1} \leq z \leq 1+\frac{1}{2}a+\sqrt{a-1}.
\end{align*}
The upper limit for $z$ is precisely $\beta$ while the lower limit is smaller than $\alpha$.
Hence $\Opt(\Gamma_M)=C_{k-1}$ for $M\in[\alpha a^k,\beta a^k]$ 
and $\Opt(\Gamma_M)=C_{k}$ for $M\in[\beta a^k,\gamma a^k]$.
\qed
\end{proof}

\begin{figure}[h]
\centering
{%
\begin{tikzpicture}[scale=.7]
\filldraw 
(0,0) circle (2pt) node[align=left,   below] {$2a^k$} --
(4,0) circle (2pt) node[align=center, below] {$\beta a^k$}     -- 
(8,0) circle (2pt) node[align=center, below] {~~~$a^k\!+\!a^{k+1}$}     -- 
(12,0) circle (2pt) node[align=right,  below] {~~~$\frac{3}{2}a^{k+1}$} --
(16,0) circle (2pt) node[align=right,  below] {~~~$2a^{k+1}$};
\draw[decorate, decoration={brace, amplitude=10pt}] (3.9,-0.7) -- coordinate [left=10pt] (A) (0,-0.7) node {};
\draw[decorate, decoration={brace, amplitude=10pt}] (11.9,-0.7) -- coordinate [left=10pt] (B) (4.1,-0.7) node {};
\draw[decorate, decoration={brace, amplitude=10pt}] (16,-0.7) -- coordinate [left=10pt] (C) (12.1,-0.7) node {};
\node (D) at (2,-1.6) {\scriptsize $(a^k)^2+(M\!-\!a^k)^2$};
\node (E) at (8,-1.6) {\scriptsize $a^{k+1}(M\!-\!\frac{a^{k+1}}{4})$};
\node (O) at (8,-2.6) {\bf $\Opt(\Gamma_M)$};
\node (F) at (14,-1.6) {\scriptsize $(a^{k+1})^2+(M\!-\!a^{k+1})^2$};
\draw[decorate, decoration={brace, amplitude=10pt}] (0,0.3) -- coordinate [left=10pt] (G) (7.9,0.3) node {};
\draw[decorate, decoration={brace, amplitude=10pt}] (8.1,0.3) -- coordinate [left=10pt] (H) (16,0.3) node {};
\node (I) at (4,1.2) {\scriptsize $(a^k)^2+(M\!-\!a^k)^2$};
\node (J) at (12,1.2) {\scriptsize $a^{k+1}M$};
\node (W) at (8,2.1) {\bf $\WEq(\Gamma_M)$};
\end{tikzpicture}
}
~\vspace{0cm} 
\caption{Breakpoints for optimum and equilibrium.}
\label{fi:breakpoints}
\end{figure}
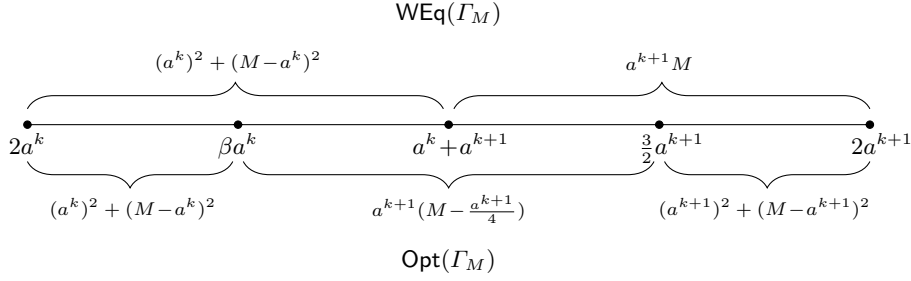

Figure~\ref{fi:breakpoints} illustrates the different intervals in which 
the equilibrium (above) and the optimum (below) change. Notice that $\Opt(\Gamma_M)$ 
varies continuously even at breakpoints, whereas $\WEq(\Gamma_M)$ has a jump at $a^k+a^{k+1}$.
We now proceed to examine the price of anarchy which will be expressed 
as a function of $z=M/a^k$. 

From the expressions of $\WEq(\Gamma_M)$ and $\Opt(\Gamma_M)$
(see Figure~\ref{fi:breakpoints}) it follows that $\PoA(\Gamma_M)=1$ throughout 
the initial interval $M\in (2a^k,\beta a^k)$. 
Over the next interval $M\in[\beta a^k,a^k\!+\!a^{k+1}]$ we have
\begin{equation*}
\PoA(\Gamma_M)=\frac{(a^k)^2+(M-a^k)^2}{a^{k+1}(M-a^{k+1}/4)}=\frac{1+(z-1)^2}{a(z-a/4)},
\end{equation*}
which increases from 1 at $z=\beta$ up to $(4+4a^2)/(a(4+3a))$ at $z=1+a$.

At $M=a^k+a^{k+1}$ the equilibrium has a discontinuity and $\PoA(\Gamma_M)$ 
jumps to $(4+4a)/(4+3a)$ and then it decreases over the interval 
$M\in (a^k+a^{k+1},\frac{3}{2}a^{k+1})$ as
\begin{equation*}
\PoA(\Gamma_M)=\frac{a^{k+1}M}{a^{k+1}(M-a^{k+1}/4)}=\frac{z}{z-a/4}.
\end{equation*}
Finally, for $M\in(\frac{3}{2}a^{k+1},2a^{k+1}]$ the price of anarchy continues 
to decrease as
\begin{equation*}
\PoA(\Gamma_M)=\frac{a^{k+1}M}{(a^{k+1})^2+(M-a^{k+1})^2}=\frac{az}{a^2+(z-a)^2}.
\end{equation*}
going back to 1 at $z=2a$ which corresponds to $M=2a^{k+1}$.

Thus the price of anarchy oscillates over each interval $(2a^k,2a^{k+1}]$ between 
a minimum value of 1 and a maximum of $(4+4a)/(4+3a)$. This completes the proof
of Theorem~\ref{th:PoAlimit}.
\qed
\end{proof}

\vspace{2ex}
\begin{proof}[of Theorem~\ref{th:illconvex}]
Consider a parallel network with two edges with a quadratic cost $c_1(x)=x^2$ on the upper edge and a lower edge cost defined by linearly interpolating $c_1$, that is, for $a\geq 2$ we let  (see Figure~\ref{fi:xsquare})
\begin{equation*}
c_2(y)=(a^{k-1}\!+\!a^k)y-a^{k-1}a^k,\quad \text{for }y\in[a^{k-1},a^k],\quad k\in\mathbb{Z}.
\end{equation*}

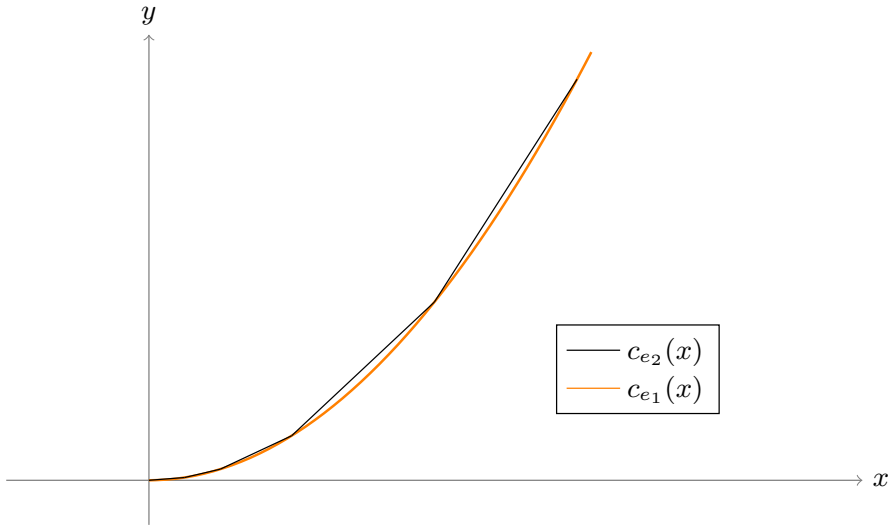
\begin{figure}[h]
\centering
\resizebox{!}{200pt}
{%
\begin{tikzpicture}[xscale=1.6,yscale=.5]
    \draw [thin, gray, ->] (0,-1) -- (0,10)      
        node [above, black] {$y$};              

    \draw [thin, gray, ->] (-1,0) -- (5,0)      
        node [right, black] {$x$};              

    \draw [color=orange,thick,domain=0:3.1,smooth,variable=\x] plot (\x,{(\x)^2});
    \draw [draw=black] (0.0,0.0) -- (0.25,0.0625);
    \draw [draw=black] (0.25,0.0625) -- (0.5,0.25);
    \draw [draw=black] (0.5,0.25) -- (1,1);
    \draw [draw=black] (1,1) -- (2,4);
    \draw [draw=black] (2,4) -- (3,9);
    \begin{customlegend}[legend entries={$c_{e_2}(x)$,$c_{e_1}(x)$,C,$d$},legend style={at={(4,3.5)},font=\footnotesize}]
    \addlegendimage{black,sharp plot}
    \addlegendimage{orange,sharp plot}
    \end{customlegend}
\end{tikzpicture}
}
~\vspace{0cm} 
\caption{$x^2$ and its linear interpolation}
\label{fi:xsquare} 
\end{figure}

Note that $c_1$ and $c_2$ are convex.
Consider the optimal cost problem

\begin{equation*}
\Opt(\Gamma_M)=\min_{\substack{x+y=M \\ x,y\geq 0}}x^3+y c_2(y).
\end{equation*}
Since the function $h(y)=y c_2(y)$ is non-differentiable, the optimality condition reads $3x^2\in\partial h(y)$. In particular, the subdifferential at $y=a^k$ is
\begin{equation*}
\partial h(a^k)=[a^{2(k-1)}(2a^2\!+\!a),a^{2k}(2a^2\!+\!a)]
\end{equation*}
and there is a range of values of $M$ for which the optimal solution is $y=a^{k}$. The smallest such $M$ is obtained when $3x^2=a^{2(k-1)}(2a^2\!+\!a)$. This gives as optimal solution $y=a^k$ and $x=a^{k-1} b$, with $b=\sqrt{(2a^2\!+\!a)/3}$, corresponding to $M_k=a^{k-1}[a+b]$ with optimal value
\begin{equation*}
\Opt(\Gamma_{M_k})=a^{3(k-1)}[b^3+a^3].
\end{equation*}

In order to find the equilibrium for $M_k$ we solve the equation
$x^2=c_2(y)$ with $x+y=M_k$. A routine calculation gives
$x=a^{k-1}c$ and $y=a^{k-1}d$ with
\begin{align*}
c=&\frac{1}{2}\left[\sqrt{(a+1)^2+4a^2+4(a+1)b}-(a+1)\right],\\
d=&a+b-c.
\end{align*}
Note that $1<d<a$ so that $y\in(a^{k-1},a^k)$, and therefore the 
equilibrium cost is
\begin{equation*}
\WEq(\Gamma_{M_k})=a^{3(k-1)}[c^3+(a+1)d^2-ad].
\end{equation*}

Putting together the previous formulas we get
\begin{equation*}
\PoA(\Gamma_{M_k})=\frac{c^3+(a+1)d^2-ad}{b^3+a^3}.
\end{equation*}
For $a=2$ this expression evaluates to $\PoA(\Gamma_{M_k})\sim 1.0059$
from which the result follows.
\qed
\end{proof}

\begin{proof}[of Theorem~\ref{th:PoAtoinfty}]
Consider a game $\Gamma_{M}=(\mathcal{G},M,\boldsymbol{c})$, where $\mathcal{G}=(V,E)$ with  $V=\{s,d\}$ and $E=\{e_{1},e_{2}\}$.
Take a sequence $\{\alpha_{k}\}$ such that $\alpha_{k+1}/\alpha_{k}\to\infty$ and assume that the costs are
\begin{align*}
c_{1}(x)&=c(x):=
\begin{cases}
\expon{}&\text{ for } x<1,\\
\expon{x}/x&\text{ for } x\ge 1,
\end{cases}\\
c_{2}(y)&=\bar{c}(y):=c(\alpha_{k+1})\quad\text{for }y\in (\alpha_{k},\alpha_{k+1}].
\end{align*}
Since we are interested in asymptotic results, we are concerned only with the case $c(x)=\expon{x}/x$.

In equilibrium
\begin{align*}
2\alpha_{k} < M \le \alpha_{k}+\alpha_{k+1} &\Longrightarrow \eqflow{y}=\alpha_{k}, \eqflow{x}=M-\alpha_{k},\\
\alpha_{k}+\alpha_{k+1} < M \le 2\alpha_{k+1} &\Longrightarrow \eqflow{y}=M-\alpha_{k+1}, \eqflow{x}=\alpha_{k+1}.
\end{align*}
At $M=\alpha_{k}+\alpha_{k+1}+\varepsilon$ we have 
\begin{equation*}
\WEq(\Gamma_{M})=c(\alpha_{k+1})\alpha_{k+1}+(M-\alpha_{k+1})c(\alpha_{k+1})=M c(\alpha_{k+1}).
\end{equation*}

We now turn to computing the optimum.
\begin{equation*}
\optflow{c}=\min_{0 \le x \le M}xc(x)+(M-x)\bar{c}(M-x)=\min_{j}\min_{\alpha_{j}<M-x\le\alpha_{j+1}}xc(x)+(M-x)c(\alpha_{j+1}).
\end{equation*}
The unconstrained optimization yields
\begin{align*}
\expon{x}&=\frac{\expon{\alpha_{j+1}}}{\alpha_{j+1}},\\
x_{j}&=\alpha_{j+1}-\ln\alpha_{j+1},\\
y_{j}&=M-\alpha_{j+1}+\ln\alpha_{j+1}.
\end{align*}
For the constrained minimizer we have
\begin{align*}
y_{j} \le \alpha_{j} &\Longrightarrow \optflow{y}_{j} = \alpha_{j} \Longrightarrow c_{j}=\expon{M-\alpha_{j}} + \frac{\alpha_{j}}{\alpha_{j+1}}\expon{\alpha_{j+1}},\\
y_{j} > \alpha_{j+1} &\Longrightarrow \optflow{y}_{j} = \alpha_{j+1} \Longrightarrow c_{j}=\expon{M-\alpha_{j+1}} + \expon{\alpha_{j+1}},\\
\alpha_{j} < y_{j} \le \alpha_{j+1} &\Longrightarrow \optflow{y}_{j} = y_{j} \Longrightarrow c_{j}=\expon{\alpha_{j+1}-\ln \alpha_{j+1}} + (M-\alpha_{j+1} + \ln \alpha_{j+1})\frac{\expon{\alpha_{j+1}}}{\alpha_{j+1}}\\
&\phantom{\Longrightarrow \optflow{y}_{j} = y_{j} \Longrightarrow c_{j}\ }=\frac{\expon{\alpha_{j+1}}}{\alpha_{j+1}}(1+M-\alpha_{j+1}+\ln \alpha_{j+1}).
\end{align*}
We consider two cases:

\noindent\textbf{Case 1}.
\begin{align*}
j\le k-1 \Longrightarrow M>2 \alpha_{j+1} &\Longrightarrow y_{j}=M - \alpha_{j+1} + \ln \alpha_{j+1} > \alpha_{j+1} + \ln \alpha_{j+1} > \alpha_{j+1} \\&\Longrightarrow c_{j}=\expon{M-\alpha_{j+1}} + \expon{\alpha_{j+1}}.
\end{align*}
\begin{align}
c_{j} \le c_{j-1} \iff h(\alpha_{j+1}) \le h(\alpha_{j}) \quad\text{where}\quad &h(x):= \expon{M-x}+\expon{x} \label{eq:h}\\
&h'(x)= \expon{x}-\expon{M-x}\le 0 \iff x \le \frac{M}{2} \nonumber\\
&\alpha_{j} \le \alpha_{j+1} \le \alpha_{k} \le \frac{M}{2} \Longrightarrow \text{ ok!}\nonumber
\end{align}

\noindent\textbf{Case 2}. 
Does $j\ge k+1$ imply $y_{j} \le \alpha_{j}$? Notice that
\begin{align*}
y_{j} \le \alpha_{j} &\iff M - \alpha_{j+1} + \ln \alpha_{j+1} \le \alpha_{j}\\
&\iff M \le \alpha_{j} + \alpha_{j+1} - \ln \alpha_{j+1}.
\end{align*}
Notice that
\begin{equation*}
M \le 2 \alpha_{k+1} \le 2 \alpha_{j}.
\end{equation*}
In turn the following mutual implications hold:
\begin{align}
2 \alpha_{j} &\le \alpha_{j} + \alpha_{j+1} - \ln \alpha_{j+1} \label{eq:2aj<ajaj+1ln}\\
&\Updownarrow \nonumber\\
\ln \alpha_{j+1} &\le \alpha_{j+1} - \alpha_{j} \nonumber\\
&\Updownarrow \nonumber\\
\frac{\ln \alpha_{j+1}}{\alpha_{j+1}} &\le 1 - \frac{\alpha_{j}}{\alpha_{j+1}} \label{eq:lnaj+1}.
\end{align}
Since 
\begin{equation*}
M\to\infty \Longrightarrow k\to\infty \Longrightarrow j\to\infty \Longrightarrow \frac{\ln \alpha_{j+1}}{\alpha_{j+1}} \to 0 \quad\text{and}\quad \frac{\alpha_{j}}{\alpha_{j+1}} \to 0.
\end{equation*}
we have that \eqref{eq:lnaj+1} and hence \eqref{eq:2aj<ajaj+1ln} hold asymptotically, therefore, for $M$ large,
\begin{equation*}
j \ge k+1 \Longrightarrow c_{j}=\expon{M- \alpha_{j}}+\frac{\alpha_{j}}{\alpha_{j+1}}\expon{\alpha_{j+1}}.
\end{equation*}

Does $j \ge k+1$ imply $c_{j}\le c_{j+1}$? Notice that
\begin{align*}
c_{j} \le c_{j+1} \iff \expon{M- \alpha_{j}}+\frac{\alpha_{j}}{\alpha_{j+1}}\expon{\alpha_{j+1}} \le \expon{M- \alpha_{j+1}}+\frac{\alpha_{j+1}}{\alpha_{j+2}}\expon{\alpha_{j+2}}.
\end{align*}
For $x \ge 1$ the function $x\mapsto h(x)$ defined as in \eqref{eq:h} is increasing ($h'(x)=x^{-2}(\expon{x}x-\expon{x}) \ge 0$ for $x \ge 1$). 
Therefore
\begin{equation*}
\expon{M- \alpha_{j}}+\alpha_{j}\frac{\expon{\alpha_{j+1}}}{\alpha_{j+1}} \le \expon{M- \alpha_{j+1}}+\alpha_{j+1}\frac{\expon{\alpha_{j+2}}}{\alpha_{j+2}}.
\end{equation*}
We now prove that the function 
\begin{equation*}
g(x):=\expon{M-x}+x\frac{\expon{\alpha_{j+2}}}{\alpha_{j+2}}
\end{equation*}
is increasing for $x \ge \alpha_{k+1}$. Indeed
\begin{equation*}
g'(x)=\frac{\expon{\alpha_{j+2}}}{\alpha_{j+2}}-\expon{M-x}
\end{equation*}
and
\begin{align*}
g'(x) \ge 0 &\iff M-x \le \alpha_{j+2} - \ln \alpha_{j+2} \\
&\iff M \le x + \alpha_{j+2} - \ln \alpha_{j+2},
\end{align*}
which is true for $x \ge \alpha_{k+1}$ iff $M \le \alpha_{k+1} + \alpha_{j+2} - \ln \alpha_{j+2}$.
Since $M \le 2 \alpha_{k+1}$, it is enough to prove that $\alpha_{k+1} \le  \alpha_{j+2} - \ln \alpha_{j+2}$, and since $x \mapsto x - \ln x$ is increasing for $x\ge 1$, it suffices to prove that $\alpha_{k+1} \le  \alpha_{k+2} - \ln \alpha_{k+2}$,  that is,
\begin{equation}\label{eq:ak+1k+2}
\frac{\alpha_{k+1}}{\alpha_{k+2}} \le 1 - \frac{\ln \alpha_{k+2}}{\alpha_{k+2}}.
\end{equation}
Since
\begin{equation*}
\frac{\alpha_{k+1}}{\alpha_{k+2}}\to 0\quad\text{and}\quad \frac{\ln \alpha_{k+2}}{\alpha_{k+2}} \to 0, \quad\text{as }k\to\infty,
\end{equation*}
the inequality \eqref{eq:ak+1k+2} holds for $k$ large enough.
More explicitly, since
\begin{equation*}
\frac{\ln \alpha_{k+2}}{\alpha_{k+2}} \le \max\frac{\ln x}{x}=\frac{1}{\expon{}},
\end{equation*}
it is enough to have
\begin{equation*}
\frac{\alpha_{k+1}}{\alpha_{k+2}} \le 1 - \frac{1}{\expon{}}.
\end{equation*}

As a consequence, 
\begin{align*}
\optflow{c}&=\min\{c_{k-1},c_{k},c_{k+1}\},\\
c_{k-1}&=\expon{M-\alpha_{k}}+\expon{\alpha_{k}},\\
c_{k+1}&=\expon{M-\alpha_{k+1}} + \alpha_{k+1}\frac{\expon{\alpha_{k+2}}}{\alpha_{k+2}},\\
c_{k}&\text{ depends on where }y_{k}\text{ lies w.r.t. the interval }(\alpha_{k},\alpha_{k+1}].
\end{align*}
If we call
\begin{equation*}
H(x):=\expon{M-x}+x \frac{\expon{\alpha_{k+2}}}{\alpha_{k+2}},
\end{equation*}
we have
\begin{align*}
y_{k} > \alpha_{k+1} &\Longrightarrow c_{k} = 
\expon{M-\alpha_{k+1}}+\expon{\alpha_{k+1}} = 
\expon{M-\alpha_{k+1}}+ \alpha_{k+1} \frac{\expon{\alpha_{k+1}}}{\alpha_{k+1}} \le H(\alpha_{k+1}) 
=c_{k+1}, \\
y_{k} \le \alpha_{k} &\Longrightarrow c_{k} = 
\expon{M-\alpha_{k}}+\frac{\alpha_{k}}{\alpha_{k+1}}\expon{\alpha_{k+1}}.
\end{align*}
\begin{newclaim}
If $y_{k} \le \alpha_{k}$, then $c_{k} \le c_{k+1}$ for $k$ large enough.
\end{newclaim}

\begin{proof}
\begin{equation*}
\expon{M-\alpha_{k}}+\alpha_{k}\frac{\expon{\alpha_{k+1}}}{\alpha_{k+1}} 
\le \expon{M-\alpha_{k}}+\alpha_{k}\frac{\expon{\alpha_{k+2}}}{\alpha_{k+2}} = H(\alpha_{k}).
\end{equation*}

\begin{align*}
H(x)\ge 0 \iff \frac{\expon{\alpha_{k+2}}}{\alpha_{k+2}} \ge \expon{M-x}
\iff \alpha_{k+2} - \ln \alpha_{k+2} +x \ge M.
\end{align*}
For $x= \alpha_{k}$ we have, for $k$ large enough,
\begin{equation*}
\alpha_{k+2} - \ln \alpha_{k+2} + a_{k} \ge 2 \alpha_{k+1} \ge M,
\end{equation*} 
since, dividing by $\alpha_{k+2}$,
\begin{equation*}
1 - \underset{\displaystyle{\begin{matrix}
\downarrow\\
0
\end{matrix}
}}
{\frac{\ln \alpha_{k+2}}{\alpha_{k+2}}} + 
\underset{\displaystyle{\begin{matrix}
\downarrow\\
0
\end{matrix}
}}
{\frac{a_{k}}{a_{k+2}}} \ge 
\underset{\displaystyle{\begin{matrix}
\downarrow\\
0
\end{matrix}
}}
{2\frac{a_{k+1}}{a_{k+2}}}.
\end{equation*}

\end{proof}

\begin{equation*}
\alpha_{k} < y_{k} \le \alpha_{k+1} \Longrightarrow c_{k} = 
\frac{\expon{\alpha_{k+1}}}{\alpha_{k+1}}(1+M- \alpha_{k+1} + \ln \alpha_{k+1}).
\end{equation*}
We need to prove that
\begin{equation*}
c_{k} \le 
c_{k+1}=\expon{M- \alpha_{k+1}} + \alpha_{k+1} \frac{\expon{\alpha_{k+2}}}{\alpha_{k+2}}.
\end{equation*}
Since $M \le 2\alpha_{k+1}$, we have 
\begin{equation*}
c_{k} \le \frac{\expon{\alpha_{k+1}}}{\alpha_{k+1}}(1+\alpha_{k+1} + \ln \alpha_{k+1}).
\end{equation*}
Now
\begin{equation*}
y_{k} > \alpha_{k} \Longrightarrow M- \alpha_{k+1} + \ln \alpha_{k+1} > \alpha_{k} \Longrightarrow \expon{M-\alpha_{k+1}} > \frac{\expon{\alpha_{k}}}{\alpha_{k+1}} \Longrightarrow c_{k+1} \ge \frac{\expon{\alpha_{k}}}{\alpha_{k+1}} + \alpha_{k+1} \frac{\expon{\alpha_{k+2}}}{\alpha_{k+2}}.
\end{equation*}
If we prove that
\begin{equation*}
\frac{\expon{\alpha_{k+1}}}{\alpha_{k+1}}(1+M- \alpha_{k+1} + \ln \alpha_{k+1}) \le 
\expon{M- \alpha_{k+1}} + \alpha_{k+1} \frac{\expon{\alpha_{k+2}}}{\alpha_{k+2}},
\end{equation*}
then we have the result. Notice that
\begin{equation*}
\underset{\begin{matrix}
\downarrow\\
1
\end{matrix}}
{\underbrace{\frac{1}{\alpha_{k+1}}(1 + \alpha_{k+1} + \ln \alpha_{k+1} - \expon{\alpha_{k}-\alpha_{k+1}})}} \le \underset{\displaystyle{\theta_{k}}}{\underbrace{\frac{\alpha_{k+1}}{\alpha_{k+2}}\expon{\alpha_{k+2}-\alpha_{k+1}}}}.
\end{equation*}
Now
\begin{align*}
\ln \theta_{k} &= (\alpha_{k+2} - \ln \alpha_{k+2}) - (\alpha_{k+1} - \ln \alpha_{k+1}) \\
&= \alpha_{k+2} \left(1 - \frac{\ln \alpha_{k+2}}{\alpha_{k+2}} \right) - \alpha_{k+1} \left(1 - \frac{\ln \alpha_{k+1}}{\alpha_{k+1}} \right) \\
&= \Bigg(\underset{\begin{matrix}
\downarrow\\
\infty
\end{matrix}}
{\underbrace{\frac{\alpha_{k+2}}{\alpha_{k+1}}}} \bigg(1 - \underset{\begin{matrix}
\downarrow\\
0
\end{matrix}}
{\underbrace{\frac{\ln \alpha_{k+2}}{\alpha_{k+2}}}} \bigg)- \bigg(1 - \underset{\begin{matrix}
\downarrow\\
0
\end{matrix}}
{\underbrace{\frac{\ln \alpha_{k+1}}{\alpha_{k+1}}}} \bigg)\Bigg)\alpha_{k+1} \ge \alpha_{k+1} \to \infty.
\end{align*}
Therefore $\theta_{k}\to\infty$ and, asymptotically, for $M$ large, $c_{k} \le c_{k+1}$. Since this holds for all three cases of $y_{k}$, we have that $\optflow{c}=\min\{c_{k},c_{k-1}\}$.

When $y_{k} > \alpha_{k+1}$, $\optflow{c}=c_{k}$ if
\begin{equation*}
c_{k}=\expon{M-\alpha_{k+1}}+\expon{\alpha_{k+1}} \le c_{k+1}=\expon{M-\alpha_{k}}+\expon{\alpha_{k}},
\end{equation*}
that is,
\begin{equation}\label{eq:expak+1ak}
\expon{\alpha_{k+1}}[1+\expon{M-2\alpha_{k+1}}] \le \expon{\alpha_{k}}[1+\expon{M-2\alpha_{k}}].
\end{equation}
Given that $\expon{M-2\alpha_{k+1}} \le 1$, the inequality in \eqref{eq:expak+1ak} is implied by
\begin{equation}\label{eq:2expak+1ak}
2\expon{\alpha_{k+1}} \le \expon{\alpha_{k}}[1+\expon{M-2\alpha_{k}}].
\end{equation}
Since $y_{k} > \alpha_{k+1}$, we have
\begin{equation*}
M \ge 2\alpha_{k+1}-\ln \alpha_{k+1} \Longrightarrow \expon{M-2\alpha_{k}} \ge \frac{\expon{2(\alpha_{k+1}-\alpha_{k})}}{\alpha_{k+1}}.
\end{equation*}
For this to hold, it suffices to have
\begin{equation*}
2\expon{2(\alpha_{k+1}-\alpha_{k})} \le 1+\frac{\expon{2(\alpha_{k+1}-\alpha_{k})}}{\alpha_{k+1}}.
\end{equation*}
We actually prove the stronger inequality
\begin{equation*}
2\expon{2(\alpha_{k+1}-\alpha_{k})} \le \frac{\expon{2(\alpha_{k+1}-\alpha_{k})}}{\alpha_{k+1}},
\end{equation*}
that is
\begin{equation*}
2\alpha_{k+1} \le \expon{\alpha_{k+1}-\alpha_{k}},
\end{equation*}
or
\begin{equation*}
\ln 2 + \ln \alpha_{k+1} \le \alpha_{k+1}-\alpha_{k},
\end{equation*}
which holds, since 
\begin{equation*}
\underset{
\begin{matrix}
\downarrow\\
0
\end{matrix}}
{\frac{\ln 2}{\alpha_{k+1}}} + \underset{
\begin{matrix}
\downarrow\\
0
\end{matrix}}
{\frac{\ln \alpha_{k+1}}{\alpha_{k+1}}} \le \underset{
\begin{matrix}
\downarrow\\
1
\end{matrix}}
{\frac{\alpha_{k+1}}{\alpha_{k+1}}} - \underset{
\begin{matrix}
\downarrow\\
0
\end{matrix}}
{\frac{\alpha_{k}}{\alpha_{k+1}}}.
\end{equation*}

When $y_{k}\le \alpha_{k}$, $\optflow{c}=c_{k-1}$ since
\begin{equation*}
c_{k}=\expon{M-\alpha_{k}}+\frac{\alpha_{k}}{\alpha_{k+1}}\expon{\alpha_{k+1}} \ge
\expon{M-\alpha_{k}}+\frac{\alpha_{k}}{\alpha_{k}}\expon{\alpha_{k}}=c_{k-1}.
\end{equation*}

When $\alpha_{k} < y_{k} \le \alpha_{k+1}$,  we have
\begin{align*}
c_{k}&=\frac{\expon{\alpha_{k+1}}}{\alpha_{k+1}}[1+M-\alpha_{k+1}\ln \alpha_{k+1}]\\
c_{k+1}&=\expon{M-\alpha_{k}}+\expon{\alpha_{k}}=\expon{-\alpha_{k}}[\expon{M}+\expon{2\alpha_{k}}].
\end{align*}
So $c_{k}\le c_{k+1}$ if
\begin{align*}
M\frac{\expon{\alpha_{k+1}}}{\alpha_{k+1}}-\frac{\expon{M}}{\expon{\alpha_{k}}} &\le \expon{\alpha_{k}} + \expon{\alpha_{k+1}}\left(1-\frac{1}{\alpha_{k+1}}-\frac{\ln \alpha_{k+1}}{\alpha_{k+1}} \right) \\
M\frac{\expon{\alpha_{k}+\alpha_{k+1}}}{\alpha_{k+1}}-\expon{M} &\le \expon{2\alpha_{k}} + \expon{\alpha_{k}+\alpha_{k+1}}\left(1-\frac{1}{\alpha_{k+1}}-\frac{\ln \alpha_{k+1}}{\alpha_{k+1}} \right).
\end{align*}
For $M$ slightly larger than $\alpha_{k}+\alpha_{k+1}$ the inequality becomes approximately
\begin{align*}
\frac{\alpha_{k}+\alpha_{k+1}}{\alpha_{k+1}}\expon{\alpha_{k}+\alpha_{k+1}}- \expon{\alpha_{k}+\alpha_{k+1}} &\le \expon{2\alpha_{k}} + \expon{\alpha_{k}+\alpha_{k+1}}\left(1-\frac{1}{\alpha_{k+1}}-\frac{\ln \alpha_{k+1}}{\alpha_{k+1}} \right) \\
\underset{
\begin{matrix}
\downarrow\\
0
\end{matrix}}{\frac{\alpha_{k}}{\alpha_{k+1}}} &\le \underset{
\begin{matrix}
\downarrow\\
0
\end{matrix}}{\expon{\alpha_{k}-\alpha_{k+1}}} + \bigg(1-\underset{
\begin{matrix}
\downarrow\\
0
\end{matrix}}{\frac{1}{\alpha_{k+1}}}-\underset{
\begin{matrix}
\downarrow\\
0
\end{matrix}}{\frac{\ln \alpha_{k+1}}{\alpha_{k+1}}} \bigg)
\end{align*}
Therefore for $M$ in a right neighborhood of $\alpha_{k}+\alpha_{k+1}$ we have $\optflow{c}=c_{k}$.

We can now compute the price of anarchy.
\begin{align*}
\PoA(\Gamma_{M})&=\frac{(\alpha_{k}+\alpha_{k+1})\frac{\expon{\alpha_{k+1}}}{\alpha_{k+1}}}{\frac{\expon{\alpha_{k+1}}}{\alpha_{k+1}}(1+\alpha_{k}+\alpha_{k+1}-\alpha_{k+1}+\ln \alpha_{k+1})}\\
&=\frac{\alpha_{k}+\alpha_{k+1}}{1+\alpha_{k}+\ln \alpha_{k+1}}\\
&=\frac{\alpha_{k+1}(1+\frac{\alpha_{k}}{\alpha_{k+1}})}{1+\alpha_{k}+\ln \alpha_{k+1}}\\
&=\frac{1+\frac{\alpha_{k}}{\alpha_{k+1}}}{\frac{1}{\alpha_{k+1}}+\frac{\alpha_{k}}{\alpha_{k+1}}+\frac{\ln \alpha_{k+1}}{\alpha_{k+1}}}\to\infty
\end{align*}
For $M$ in a right neighborhood of $\alpha_{k}+\alpha_{k+1}$.
Therefore
\begin{equation*}
\limsup_{M\to\infty}\PoA(\Gamma_{M})=+\infty. \qquad\qed
\end{equation*}
\end{proof}

\section{Figures}

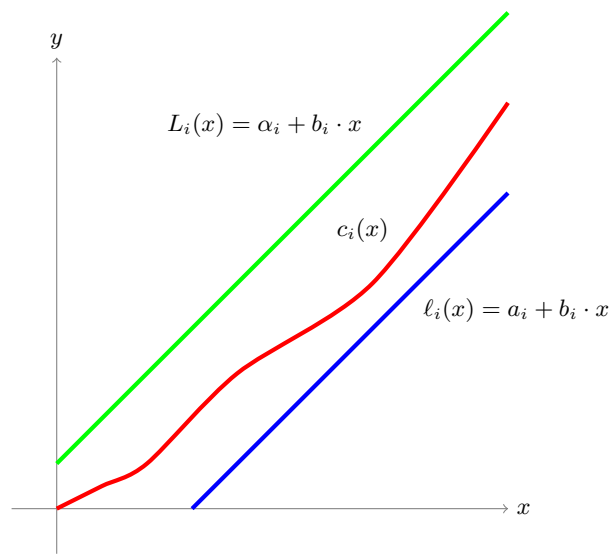
\begin{figure}[h]
\centering
{%
\begin{tikzpicture}[scale=.6]
    \draw [thin, gray, ->] (0,-1) -- (0,10)      
        node [above, black] {$y$};              

    \draw [thin, gray, ->] (-1,0) -- (10,0)      
        node [right, black] {$x$};              

    \draw [draw=green,ultra thick,mylabel=at 0.7 above left with {$L_{i}(x) = \alpha_{i}+b_{i}\cdot x$}] (0,1) -- (10,11);
    \draw [draw=blue,ultra thick,mylabel=at 0.7 below right with {$\ell_{i}(x) = a_{i} + b_{i}\cdot x$}] (3,0) -- (10,7);
    \draw [draw=red,ultra thick,mylabel=at 0.7 above left with {$c_{i}(x)$}] plot [smooth] coordinates {(0,0) (1,0.5) (2,1) (4,3) (7,5) (10,9)};
\end{tikzpicture}
}
~\vspace{0cm} \caption{\label{fi:affinebound} Affinely bounded costs.}
\end{figure}

\end{document}